\declaretheoremstyle[bodyfont=\it]{it}
\declaretheoremstyle[qed=$\lrcorner$,bodyfont=\rm]{rm}
\declaretheorem[style=it]{theorem}
\declaretheorem[style=it,name=Lemma,numberlike=theorem]{lemmalpha}
\declaretheorem[style=it,name=Corollary,numberlike=theorem]{corollaryalpha}
\declaretheorem[style=it]{lemma}
\declaretheorem[style=rm,numberlike=lemma]{remark}
\declaretheorem[style=it,numberlike=lemma]{claim}
\newcommand{\pr}{\mathrm{Pr}}
\def\st{\;{\bm :}\;}
\def\set#1{\left\{#1\right\}}
\let\tmp=\phi \let\phi=\varphi \let\varphi=\tmp
\let\tmp=\epsilon \let\epsilon=\varepsilon \let\varepsilon=\tmp
\def\mkmathletter#1#2{%
    \expandafter\gdef\csname#1#2\endcsname%
           {\ensuremath{\csname math#2\endcsname{#1}}}%
}
\def\mkmathletters#1#2{%
        \@for\ltr:={#2}\do{\expandafter\mkmathletter\ltr{#1}}%
}
\edef\letters{a,b,c,d,e,f,g,h,i,j,k,l,m,n,o,p,q,r,s,t,u,v,w,x,y,z}
\edef\Letters{A,B,C,D,E,F,G,H,I,J,K,L,M,N,O,P,Q,R,S,T,U,V,W,X,Y,Z}
\edef\Lletters{\letters,\Letters}
\let\dmo=\DeclareMathOperator
\dmo\AK{\mathsf{AK}}
\dmo\Copy{\mathsf{Copy}}
\dmo\dist{\mathsf{dist}}
\definecolor{lightred}{rgb}{1, .30, 0.30}
\definecolor{lightblue}{rgb}{0.30, .30, 1.0}
\title{Structural Properties of Entropic Vectors and Stability of the Ingleton Inequality}
\author{Rostislav  Matveev and Andrei Romashchenko}
\begin{document}

\maketitle

\abstract{ We study constrained versions of the Ingleton inequality in
  the entropic setting and quantify its stability under small
  violations of conditional independence. Although the classical
  Ingleton inequality fails for general entropy profiles, it is known
  to hold under certain exact independence constraints. We focus on
  the regime where selected conditional mutual information terms are
  small (but not zero), and the inequality continues to hold up to
  controlled error terms.
  A central technical tool is a structural lemma that “materializes’’ part of
  the mutual information between two random variables, 
  implicitly capturing the effect of infinitely many non-Shannon--type
  inequalities. This leads to conceptually transparent proofs without explicitly
  invoking such infinite families.
  Some of our bounds recover, in a unified way, what can also be deduced from
  the infinite families of inequalities of Mat\'u\v{s} (2007) and of
  Dougherty--Freiling--Zeger (2011), while others appear to be new.
}

\section{Introduction}\label{s:intro}
In this paper we study inequalities for Shannon entropy, in particular
variants of the Ingleton inequality in the entropic setting.  At a
high level, the goals of this paper can be summarized as two broad
objectives:
\begin{enumerate}[label=(\Roman*)]
\item gaining a better understanding of the geometry of the boundary
  of the cone of almost-entropic vectors;
\label{obj:i}
\item developing tools for further progress in understanding
  conditional independence structures.
\label{obj:ii}
\end{enumerate}
More specifically, we focus on conditional versions of the Ingleton
inequality and on related structural properties of entropy.  It is
well known that the Ingleton inequality in its original form does not
hold for the entropies of certain collections of jointly distributed
random variables; see,
e.g.~\cite{matuvs1995conditional,csirmaz1996dealer,zhang1998characterization,matu1999conditional,hammer2000inequalities}.
However, under certain restrictions (or with suitable additional
terms) the inequality may still hold. We study several manifestations
of this phenomenon.

\subsection{The cone of almost-entropic points}
We begin with a brief review of the standard framework for information
inequalities.  For a tuple of random variables
$(X_i\st i=0,\dots,n-1)$ its \emph{entropy profile} is the function on the
power set of the set $\set{0,\dots,n-1}$ indexing the tuple
whose value on a set $J\in 2^{[n]}$ is the entropy of the joint
distribution $(X_{i}\st i\in J)$. Since the entropy of the empty
collection is always zero, we drop the corresponding value from
consideration. Thus, every entropy profile is effectively a point in
$\Rbb^{2^{n}-1}$.  In particular, for a four-tuple of random variables,
 its entropy profile is a 15-dimensional vector.

A point in $\Rbb^{2^{n}-1}$ is called \emph{entropic} if it is the
entropy profile of the probability distribution of some tuple
$(X_i\st i=0,\dots,n-1)$.  The set of all entropy profiles for all
$n$-tuples of random variable is usually denoted $\Gamma_{n}^\star$.
The points in the closure of the set of entropy profiles are called
\emph{almost-entropic points}.  This set is denoted by
$\bar \Gamma_{n}^\star$.  It has been shown in~\cite{zhang1997non} that this
set is in fact a convex cone.  The geometry of this cone is quite
non-trivial.  
An exact description of this cone for $n\ge 4$ is still unknown.
However, there are some known inner and outer bounds for this set.
In what follows we briefly discuss the basic facts about 
$\bar \Gamma_{4}^\star$.

An entropy profile of a four-tuple of jointly distributed random
variables can be thought of as a function assigning a real number to
every non-empty subset of $\{0,1,2,3\}$.  It is known that every such
function $\phi$ necessarily satisfies the inequalities expressing the
properties of \emph{non-negativity}, \emph{monotonicity}, and
\emph{submodularity}:
\begin{align*}
  0\leq \phi(A) \le \phi(B)
  &&&
     \text{for all sets}\ A\subset B \subset \set{0,1,2,3},
  \\
  \phi(A\cup B) + \phi(A\cap B)  \le \phi(A) + \phi(B)
  &&&
      \text{for all sets}\ A, B \subset \set{0,1,2,3}.
\end{align*}
In other words, every entropy profile is a non-negative, monotone,
submodular function. In information theory, these constraints for $\phi$ are
usually called \emph{Shannon inequalities}.  In algebra, the functions
that respect these constraints are usually called \emph{rank functions of a polymatroid}.
This is a fairly standard and well-studied class of objects. 

The set of all $\phi$ satisfying these inequalities is traditionally
denoted $\Gamma_{4}$ (the cone of submodular functions on a four-point
set). This set is clearly a superset (i.e., an outer bound) of
$\bar \Gamma_{4}^\star$.
In other words, the axioms of polymatroids (\emph{non-negativity}, \emph{monotonicity}, and \emph{submodularity}) are necessary conditions for a point to be entropic.

The cone $\bar \Gamma_{4}^\star$ is strictly smaller than $\Gamma_{4}$; that is, besides the inequalities valid for all polymatroids, 
there exist other linear inequalities that are valid for all almost-entropic points (for distributions of $n\ge 4$ random variables).
The inequalities that are valid for all almost-entropic points but do not follow from the Shannon inequalities are called \emph{non-Shannon type inequalities}.
The first example of such an inequality was discovered by Z.~Zhang and R.\,W.~Yeung in \cite{zhang1998characterization}.
In the following years many other non-Shannon type inequalities were discovered; see, e.g., \cite{makarychev2002new,dougherty2006six,xu2008projection,dougherty2011non,csirmaz2016using}.
Moreover, F.~Mat\'{u}\v{s} proved in \cite{matus2007infinitely} that there exist infinitely many independent non-Shannon type inequalities 
(in other words, $\bar \Gamma_{4}^\star$ is not polyhedral, as well as all $\bar \Gamma_{n}^\star$ for $n>4$).

In the early years after the discovery of the first non-Shannon type inequalities, much effort was devoted to understanding how large the gap between $\bar \Gamma_{4}^\star$ and $\Gamma_{4}$ is. 
Accordingly, particular interest was placed on non-Shannon inequalities that exclude a significant region of the cone $\Gamma_{4}$.
In recent years, attention has shifted to somewhat different questions. For example: what is the geometry of the boundary of the cone $\Gamma_{4}$? 
How accurately can it be approximated by piecewise algebraic surfaces? How smoothly does the curved part of the boundary approach a given flat face?
The main results of this paper are primarily relevant to the second type of questions, namely those related to the geometry of the boundary of $\bar \Gamma_{4}^\star$
(see Objective~\ref{obj:i} on p.~\pageref{obj:i}).

The main results of this paper are centered around the \emph{Ingleton inequality} described in \cite{ingleton}.
To discuss this inequality, we introduce the following notation: 
for a set function $\phi$ define
\begin{align*}
  \phi(x:y)
  &:=
    \phi(\set{x})+\phi(\set{y})-\phi(\set{x,y}),\\ 
  \phi(x:y|z)
  &:=
    \phi(\set{x,z})+\phi(\set{y,z})-\phi(\set{x,y,z}) - \phi(\set{z}).
\end{align*}
Now the Ingleton inequality can be written in the following compact form:
\begin{equation}\label{eq:ingleton1234}
  \phi(0:1) \leq \phi(0:1|2) + \phi(0:1|3) + \phi(2:3).
\end{equation}
Permuting the set $\set{0,1,2,3}$ we obtain $\binom{4}{2} = 6$ similar
inequalities that differ in the order of the variables.  We denote by
$\Gamma^{\rm Ing}_4$ the cone of all non-negative, monotone,
submodular functions $\phi$ on $\set{0,1,2,3}$ that satisfy, in
addition to the Shannon inequalities, the six versions of the Ingleton
inequality on a four-point set
(this cone under a different notation was discussed, e.g., in \cite{chan2010existence,weilenmann2017analysing}).

The Ingleton inequality is a \emph{sufficient} condition for being an entropic point, i.e., 
every $\phi$ in $\Gamma^{\rm Ing}_4$ is entropic, see
\cite{matus-studeny1995conditional} (independently, this was proved in
\cite{hammer2000inequalities}).
In other words, $\Gamma^{\rm Ing}_4$ is a subset (inner bound) of
the set of almost-entropic points.
It is well known that the Ingleton inequality is not a \emph{necessary} condition for being an entropic point, i.e., 
there exist entropic $\phi$ outside  $\Gamma^{\rm Ing}_4$,
see \cite{matuvs1995conditional,csirmaz1996dealer,zhang1998characterization,matu1999conditional,hammer2000inequalities,boston2020violations,matus2016entropy}.

The sets $\Gamma_4$ and $\Gamma_4^{\mathrm{Ing}}$ are both polyhedral
cones, and the set of almost-entropic points is sandwiched between the
two:
\( \Gamma_{4}^{\mathrm{Ing}}\subsetneq\bar
\Gamma_4^\star\subsetneq\Gamma_{4}.  \) The closure of
$\Gamma_4\setminus \Gamma_{4}^{\mathrm{Ing}}$ consists of six
closed convex cones with disjoint interiors,  \cite{matveev2020tropical}.   
They are symmetric under permutations of the set $\set{0,1,2,3}$.  
From now on, we consider one of these cones in which the strict inequality~\eqref{eq:ingleton1234} is violated 
and, following \cite{matveev2020tropical}, refer to it as the \emph{non-Ingleton cone}.
Let us briefly recall how this object can be viewed in geometric terms (see \cite{matveev2020tropical} or \cite{chan2010existence,guille2011minimal} for more details).
The non-Ingleton cone has a 14-dimensional simplex as its base.  One
of the 13-dimensional faces of this simplex is defined by the Ingleton
equality and is a common face with the base of
$ \Gamma_{4}^{\mathrm{Ing}}$. The remaining vertex of the simplex is
the so-called \emph{special point}. It is neither an entropic nor an almost
entropic point, as was shown in~\cite{zhang1998characterization}.  

The study of the cone of almost-entropic points essentially reduces to the
question of what neighborhood around the special point is free of almost
entropic points, or, equivalently, what neighborhood of the face shared
between the Ingleton and non-Ingleton regions is filled with almost
entropic points.%
\footnote{ If we denote by $\alpha_{1},\dots,\alpha_{15}$ the convex
  coordinates in the non-Ingleton simplex (the base simplex of
  non-Ingleton cone), such that $\alpha_{15}=1$ is the defining equation of the
  special point, while $\alpha_{15}=0$ is the equation of the face
  common with the Ingleton-region, then the almost -entropic region is
  described by the inequality
\[
  \alpha_{15}\leq \Phi(\alpha_{1},\dots,\alpha_{14})
\]
where $\Phi$ is some concave function on the $13$-dimensional
simplex. Ultimately, the complete characterization of
$\bar{\Gamma}^{*}_{4}$ reduces to the description of the function $\Phi$.
The main results of this paper contribute to a better understanding
of the behavior of $\Phi$ near the boundary of its domain.} %
Thus, the Ingleton inequality plays a crucial role in the geometry of $\bar{\Gamma}^{*}_{4}$. 
On the one hand, this inequality together with all its permutations provides an inner bound for this cone. 
On the other hand, many non-Shannon type inequalities  can be written in the form of the Ingleton inequality with a relatively small additional ``correction term'', 
which give an outer bound for this cone, see the discussion in  \cite[section~2]{weilenmann2017analysing}.
Systematic investigations of the geometry of the cones $\Gamma_{4}$, $\bar \Gamma_4^\star$, and $\Gamma_{4}^{\mathrm{Ing}}$ 
can be found in~\cite{csirmaz2025exploring,matus2016entropy,matveev2020tropical}.
In general, the complete description of $\bar \Gamma_4^\star$ remains far from being understood,
despite considerable efforts devoted to this problem.

\subsection{Constraint versions of the Ingleton inequality.}
In what follows we consider the Ingleton inequality in the entropic setting and discuss several versions of this inequality 
obtained by imposing constraints and/or adding a negligible correction term, that are valid for entropic and almost-entropic points.
To avoid ambiguity and improve the readability of
the notation we denote the four-tuple of jointly distributed random
variables $(X,Y,A,B)$, and rewrite~\eqref{eq:ingleton1234} with the
usual notation for the conditional mutual information as
\begin{equation}\label{eq:ingleton}
\tag{ING} 
  I(X:Y) \leq I(X:Y|A) + I(X:Y|B) + I(A:B)
\end{equation}
When discussing constraint inequalities, we assume that some of the quantities
\begin{equation}\label{eq:assumptions}
  I(X:A|Y), I(Y:A|X), I(X:B|Y) , I(Y:B|X), I(X:Y|A),  I(X:Y|B)
\end{equation}
are small compared to the other information quantities.
It is known that~\eqref{eq:ingleton} holds for joint distributions
$(X,Y,A,B)$ satisfying certain independence constraints, that is,
whenever some of the quantities in~\eqref{eq:assumptions} vanish,
see  \cite{kaced2013conditional,studeny2021conditional}.
We will focus on the following two implications:
\begin{align}
\label{ineq:cond-ineq-1}
I(X:A|Y) = I(Y:A|X) = 0 \;&\Longrightarrow\;  \eqref{eq:ingleton},\\
\label{ineq:cond-ineq-2}
I(X:Y|A) = I(X:A|Y) = 0 \;&\Longrightarrow\;  \eqref{eq:ingleton}.
\end{align}
To the best of our knowledge, the conditional inequalities \eqref{ineq:cond-ineq-1} and \eqref{ineq:cond-ineq-2} were first explicitly formulated in this form in \cite{kaced2013conditional}. 
The proof given in \cite{kaced2013conditional} relied on a deep and nontrivial result of Mat\'u\v{s}, namely the infinite series of conditional inequalities established in \cite{matus2007infinitely}.
Studen\'y proposed in \cite{studeny2021conditional} a different proof of \eqref{ineq:cond-ineq-1} and \eqref{ineq:cond-ineq-2}, which is more intuitive and technically simpler. 
His argument is based on the following simple lemma. %
\begin{lemma}[see lemma~8 in \cite{studeny2021conditional}]
\label{l:studeny}
Every discrete distribution of random variables $(X,Y,A)$ satisfying
\[
I(X:A|Y)=0,\ I(Y:A|X)=0
\]
can be extended by a discrete random variable $W$ such that in the joint distribution $(X,Y,A,W)$
the variable $W$ is functionally dependent on both $X$ and $Y$, and moreover
\[
I(X,Y:A|W)=0.
\]
\end{lemma}
The proof of \eqref{ineq:cond-ineq-1} and \eqref{ineq:cond-ineq-2} then proceeds by applying this lemma in combination with standard Shannon inequalities. 
Conceptually, this argument is reminiscent of the classical proof of the Ingleton inequality for linearly representable matroids.
\begin{remark}
Lemma~\ref{l:studeny} and its use in proving information inequalities are analogous to the Ahlswede--K\"orner lemma and its role in the derivation of non-Shannon-type information inequalities in \cite{makarychev2002new}.
Recall that the Ahlswede--K\"orner lemma was used in  \cite{farras2018improving} in computer-assisted proofs of lower bounds for problems of secret sharing.
We believe that Lemma~\ref{l:studeny} (and its ``smooth'' versions discussed below) can be also used in computer-assisted proofs of conditional information inequalities.
\end{remark}

However, the simplicity of Studen\'y's approach comes at a price. 
Lemma~\ref{l:studeny} was proved in \cite{studeny2021conditional} for entropic points, but not for almost-entropic ones. 
Consequently, the resulting proof of \eqref{ineq:cond-ineq-1} and \eqref{ineq:cond-ineq-2} applies only to entropic points 
(whereas the more involved argument in \cite{kaced2013conditional} shows that these inequalities also hold for almost-entropic points).

In this paper we introduce a technical tool that combines the transparency of Studen\'y's approach with the power of Mat\'u\v{s}'s method.
We prove a robust version of Lemma~\ref{l:studeny} that applies to almost-entropic points, see Corollary~\ref{cor:W-bound} and Corollary~\ref{cor:W-bound-0}.
Moreover, we establish a trade-off between the error in the assumptions
\begin{equation}
\label{eq:approx-cond}
I(X:A|Y) \approx 0,\qquad I(Y:A|X)\approx 0
\end{equation}
and the possible deviation in the conclusions
\begin{equation}
\label{eq:approx-conclusion}
H(W|X)\approx 0,\qquad H(W|Y)\approx 0,\qquad I(XY:A|W)\approx 0.
\end{equation}
That is, we assume that the approximate equations in \eqref{eq:approx-cond} hold with precision $\epsilon$ and 
analyze the resulting asymptotics in \eqref{eq:approx-conclusion}.

Throughout the paper we formulate results of this type using asymptotic notation, without specifying the multiplicative constants hidden in the $O(\cdot)$ terms. 
We believe that refining the asymptotic behavior, rather than optimizing specific constants, will help reveal fundamental properties of the cone of almost-entropic points,
see also Remark~\ref{rem:3}.
For a discussion of non-asymptotic and unconditional versions of Theorems~\ref{th:sqrt}, \ref{th:sqrt-ii}, and~\ref{th:eloge}, see Remark~\ref{r:matus-conditional-ineq}.

Let us mention that \eqref{ineq:cond-ineq-1} and \eqref{ineq:cond-ineq-2} helped to classify the \emph{conditional independence structures}
over $4$ random variables,  \cite{studeny2021conditional,boege2024no} (cf.~Objective~\ref{obj:ii} on p.~\pageref{obj:ii}).
The classification of similar structures for $n>4$ remains an open problem, partly due to the rapidly increasing technical difficulties as the dimension grows.
We hope that the conceptual tools proposed in this paper, such as the robust version of Lemma~\ref{l:studeny}, may contribute to progress in this direction.

\subsection{Main technical results}
We focus on the possible violation of the Ingleton inequality in the
regime where some of the mutual information terms
in~\eqref{eq:assumptions} are relatively small but still positive.  We
consider three cases:
\begin{enumerate}[label=(\roman*)]
\item\label{i:xay,ayx}
  \(\max\{I(X:A|Y),\, I(Y:A|X)\} \le \epsilon\, I(X:Y)\);
\item\label{i:xay,xya}
  \(\max\{I(X:A|Y),\, I(X:Y|A)\} \le \epsilon\, I(Y:A)\)
\item\label{i:4bounds}
  \(\max\{I(X:A|Y),\, I(Y:A|X),\, I(X:B|Y),\, I(Y:B|X)\}
  \le \epsilon\, I(X:Y)\).
\end{enumerate}

If only two conditional mutual informations are bounded, as in
cases~\ref{i:xay,ayx} and~\ref{i:xay,xya}, then the Ingleton
inequality holds for such a four-tuple up to an additive error of order
$\sqrt{\epsilon}$. These cases are addressed in Theorem~\ref{th:sqrt} and
Theorem~\ref{th:sqrt-ii}. In contrast, when all four conditional mutual
informations are bounded, as in case~\ref{i:4bounds}, the Ingleton
inequality remains valid up to an additive error of order
$\epsilon \log \epsilon^{-1}$, as in Theorem~\ref{th:eloge} below.
\begin{theorem}[name=,restate=thsqrt]
\label{th:sqrt}
  Let $(X,Y,A,B)$ and $\epsilon\ge 0$ be such that
  \[
    \max\set{I(X:A|Y),I(Y:A|X)}\leq\epsilon\cdot I(X:Y).
  \]
  Then
  \[
    I(X:Y) \le I(X:Y|A) +  I(X:Y|B) + I(A:B) + O(\sqrt{\epsilon}) \cdot I(X:Y).
  \]
\end{theorem}

\begin{theorem}[name=,restate=thsqrtii]
  \label{th:sqrt-ii}
  Let $(X,Y,A,B)$ and $\epsilon\ge 0$ be such that
  \[
    \max\set{I(Y:A|X),I(X:Y|A)} \leq \epsilon\cdot I(X:A).
  \]
  Then
  \[
    I(X:Y)  \le  I(X:Y|A) +  I(X:Y|B) + I(A:B) + O(\sqrt{\epsilon})  \cdot I(X:A). 
  \]
\end{theorem}

\begin{theorem}[name=,restate=theloge]
  \label{th:eloge}
  Let $(X,Y,A,B)$ be a four-tuple of random variables such that
  \[
    \max\{I(X:A|Y), I(Y:A|X), I(X:B|Y), I(Y:B|X)\} \le \epsilon\cdot I(X:Y).
  \]
  Then
  \[
  I(X:Y) \leq I(X:Y|A) + I(X:Y|B) + I(A:B) +
  O(\epsilon\cdot\log\epsilon^{-1})\cdot I(X:Y). 
  \]
\end{theorem}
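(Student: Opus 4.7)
My approach exploits the full symmetry of the hypothesis: both $A$ and $B$ are approximately conditionally independent of $X$ given $Y$, and of $Y$ given $X$. In the exact-equality case this is the classical double Markov condition, which forces $A$ and $B$ to be functions of the common part of $X$ and $Y$ and makes the Ingleton inequality immediate via~\eqref{ineq:cond-ineq-1}. The goal is to make this reduction quantitative with the right dependence on $\epsilon$.

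The core of the plan is to apply the structural ``materialization'' lemma advertised in the abstract \emph{iteratively}. For an integer parameter $s\ge 1$ to be chosen later, I would construct, through $s$ rounds of the lemma, an auxiliary random variable $W=W_{s}$ jointly distributed with $(X,Y,A,B)$ which is nearly a function of $X$ alone and nearly a function of $Y$ alone, with the residual $H(W\mid X) + H(W\mid Y)$ decaying geometrically in $s$. The two-sided symmetry of both $A$ and $B$ with respect to $X,Y$ is what permits the iteration to \emph{alternate} between copy-steps ``across $X$'' (using the smallness of $I(A:X\mid Y)$ or $I(B:X\mid Y)$) and copy-steps ``across $Y$'' (using $I(A:Y\mid X)$ or $I(B:Y\mid X)$). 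Each such step costs $O(\epsilon)\cdot I(X:Y)$ in additive error but contracts the residual by a constant factor strictly less than one.

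Once $W$ is materialized, I would use it to relate $I(X:Y)$ to the Ingleton right-hand side, arriving at a bound of the schematic form
\[
  I(X:Y) \;\le\; I(X:Y\mid A) + I(X:Y\mid B) + I(A:B) + C_{1}\, s\epsilon\cdot I(X:Y) + C_{2}\, 2^{-s}\cdot I(X:Y).
\]
Choosing $s=\Theta(\log\epsilon^{-1})$ balances the two error terms: the second becomes $O(\epsilon)\cdot I(X:Y)$ while the first inflates only to $O(\epsilon\log\epsilon^{-1})\cdot I(X:Y)$, which is the advertised bound.

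The main obstacle I expect is establishing the geometric (as opposed to merely polynomial) decay in $s$ of the residual in the iterated construction. This is exactly what separates Theorem~\ref{th:eloge} from Theorems~\ref{th:sqrt} and~\ref{th:sqrt-ii}: in those theorems only two of the four conditional mutual informations are controlled, so only one direction of copy-step is available, which gives an arithmetic $1/s$-type improvement per iteration and hence $\sqrt{\epsilon}$ after balancing. The four-way symmetry present here should enable a genuine contraction-type recursion, but making this explicit while keeping track of the per-step $O(\epsilon)$-cost is where the technical content of the argument lies, and it is essentially the mechanism by which the Dougherty--Freiling--Zeger infinite family gets absorbed into a single statement.
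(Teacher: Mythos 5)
Your high-level skeleton is right: iterate the materialization lemma for $s$ rounds, aim for an error budget of the schematic form $C_1\, s\epsilon \cdot I(X:Y) + C_2\cdot 2^{-s}\cdot I(X:Y)$, and balance by taking $s = \Theta(\log\epsilon^{-1})$. You have also correctly identified where the content is: the geometric (rather than $1/s$-type) improvement per round is what separates this theorem from Theorems~\ref{th:sqrt} and~\ref{th:sqrt-ii}, and the additional pair of small conditional mutual informations together with $A\leftrightarrow B$ symmetry is what makes it possible. But the proposal as written has a genuine gap that the paper's proof is precisely designed to close, and two of your structural descriptions do not match what actually happens.

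First, the gap. Your plan rests on the claim that each round ``contracts the residual by a constant factor strictly less than one,'' unconditionally. No such contraction is available. The quantity that one would like to decay is $I(X:Y\mid W_i)$ (not $H(W\mid X)+H(W\mid Y)$, which you wrote --- that quantity \emph{grows} by $O(\epsilon)\cdot I(X:Y)$ per step; this is the $s\epsilon$ cost in your schematic bound, so you have inverted the roles of the two quantities). The iterated construction gives $I(X:Y\mid W_{i+1}) = I(X:Y\mid W_i, A)$, and nothing forces this to be $\le \lambda\, I(X:Y\mid W_i)$ for $\lambda < 1$. The paper's proof does not establish this decay; instead it runs a dichotomy. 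Fix $\lambda = 1/2$ and $\delta$, and let $k$ be the first index at which either the ratio condition $I(X:Y\mid W_{i+1}) \le \tfrac12 I(X:Y\mid W_i)$ or the threshold $I(X:Y\mid W_i) \ge \delta\, I(X:Y)$ fails. If the threshold fails (Case~2: $I(X:Y\mid W_k) \le \delta\, I(X:Y)$), the standard common-information derivation of Ingleton with $W_k$ in place of the common part gives the bound with error $O\big((\delta + k\epsilon)\cdot I(X:Y)\big)$. If the ratio condition fails (Case~1: $I(X:Y\mid W_{k+1}) \ge \tfrac12 I(X:Y\mid W_k)$), this is \emph{turned into an asset}, not worked around: one applies the Shannon inequality
\[
  I(X:Y\mid A) \ge H(W_k\mid A) + I(X:Y\mid W_k, A) - \big(H(W_k\mid X,A) + H(W_k\mid Y,A)\big)
\]
and its $B$-analogue. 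Using $I(X:Y\mid W_k, A) = I(X:Y\mid W_{k+1}) = I(X:Y\mid W_k, B)$ (this equality of the $A$- and $B$-conditionings is exactly what the symmetrization of $A$ and $B$ buys) and the Case~1 hypothesis, these become
\[
  I(X:Y\mid A) \ge H(W_k\mid A) + \tfrac12 I(X:Y\mid W_k) - 2k\epsilon\, I(X:Y), \qquad
  I(X:Y\mid B) \ge H(W_k\mid B) + \tfrac12 I(X:Y\mid W_k) - 2k\epsilon\, I(X:Y).
\]
Adding these to $H(W_k)\le H(W_k\mid A)+H(W_k\mid B)+I(A:B)$ and $I(X:Y)\le H(W_k)+I(X:Y\mid W_k)$, the two $\tfrac12 I(X:Y\mid W_k)$ contributions exactly cancel the $+I(X:Y\mid W_k)$ term, and only the $O(k\epsilon)\cdot I(X:Y)$ error survives. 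In either case $k = O(\log\delta^{-1})$, and taking $\delta = \epsilon$ gives the claimed $O(\epsilon\log\epsilon^{-1})$. So the right resolution of the obstacle you flagged is not ``establish geometric decay''; it is ``make the failure of geometric decay itself yield the bound,'' via the $A,B$-symmetric cancellation at $\lambda = 1/2$.

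Second, a smaller mismatch in the construction. You describe alternating copy-steps ``across $X$'' and ``across $Y$.'' The construction in Lemma~\ref{l:Wk-bounds} does not alternate directions: each step copies $A$ \emph{under $(X,Y)$} and then applies the Ahlswede--Körner reduction under the full tuple. The four small conditional mutual informations do not enter as four distinct copy directions; two of them ($I(A:X\mid Y)$, $I(A:Y\mid X)$) control the growth of $H(W_i\mid X)$, $H(W_i\mid Y)$, while the other two enter only after symmetrizing the distribution so that $A$ and $B$ become interchangeable, which is what guarantees the crucial equality $I(X:Y\mid W_k, A) = I(X:Y\mid W_k, B)$ in Case~1.
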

These three theorems capture certain geometric features of the boundary of the cone $\bar\Gamma^{\star}_4$, 
in particular on how smoothly the curved part of the boundary approaches the flat face defined by \eqref{eq:assumptions}.

\begin{remark}
\label{rem:3}
  A special case of Theorem~\ref{th:sqrt}, the
  implication~\eqref{ineq:cond-ineq-1}, is the \emph{essentially
    conditional inequality} (${\cal I}4'$) discussed
  in~\cite{kaced2013conditional}.  From~\cite[theorem~3
  (claim~4)]{kaced2013conditional} it follows that the term
  $O(\sqrt{\epsilon})$ in this theorem cannot be replaced by a
  function of order smaller than $O(\epsilon^{2/3})$.  Closing the
  remaining gap between the upper bound $O(\sqrt{\epsilon})$ and the
  lower bound $\Omega(\epsilon^{2/3})$ remains an open problem.
  
  Similarly, a special case of Theorem~\ref{th:sqrt-ii}, the
  implication~\eqref{ineq:cond-ineq-2}, is the \emph{essentially
    conditional inequality} denoted (${\cal I}5'$)
  in~\cite{kaced2013conditional}.  From~\cite[theorem~3
  (claim~5)]{kaced2013conditional} it follows that the bound in
  Theorem~\ref{th:sqrt-ii} is asymptotically tight, i.e., the term
  $O(\sqrt{\epsilon})$ in the conclusion of the theorem cannot be
  replaced by $o(\sqrt{\epsilon})$.
  
  We do not know whether the term $O(\epsilon\cdot\log\epsilon^{-1})$
  in Theorem~\ref{th:eloge} is asymptotically tight. This seems to be
  an important question, since if the bound is tight, it will imply
  that neither $\Gamma_{4}^{*}$ nor $\bar\Gamma_{4}^{*}$ are
  semi-algebraic sets, see~\cite{csirmaz2014structure,
    gomez2017defining}.
\end{remark}
\begin{remark}\label{r:matus-conditional-ineq}\ 
  Theorems~\ref{th:sqrt} and~\ref{th:sqrt-ii} can be derived from the
  infinite families of non-Shannon--type inequalities
  \begin{align*}
    I(X:Y)
    &\le I(X:Y|A) +  I(X:Y|B) + I(A:B)  \\
    &\quad
      {} + \frac1k I(X:Y|A) + \frac{k+1}2\big(I(X:A|Y) + I(Y:A|X)\big)
      \quad\text{for} \ k\in\Nbb
  \intertext{and}
    I(X:Y)
    &\le I(X:Y|A) +  I(X:Y|B) + I(A:B)  \\
    &\quad+
      \frac1k I(Y:A|X) + \frac{k+1}2\big(I(X:Y|A) + I(X:A|Y)\big)
      \quad\text{for } k\in\Nbb
  \end{align*}
  proven by Mat\'u\v s in~\cite{matus2007infinitely} (it suffices to
  choose a suitable value of $k$).  In fact, our proof of the lemma
  above runs parallel to the argument of Mat\'u\v s --- we apply the
  Copy Lemma and the Ahlswede--K\"orner lemma, while Mat\'u\v s used
  in~\cite{matus2007infinitely} a similar (and formally equivalent)
  method of \emph{adhesive extensions}.

  Likewise, Theorem~\ref{th:eloge} follows from the exponential
  family of inequalities in~\cite[Theorem 10]{dougherty2011non},
  \begin{align*}
    (2^{s-1} - 1)I(X:Y)
    &\leq
      2^{s-1}I(X:Y|A)
      +
      (2^{s-1} - 1)I(X:Y|B)\\
    &\quad+
      (s - 1)2^{s-2}\big(I(X:A|Y)+I(Y:A|X)\big)\\
    &\quad+
      ((s - 3)2^{s-2} + 1)\big(I(X:B|Y)+I(Y:B|X)\big)\\
    &\quad+
      (2^{s-1} - 1)I(A:B)\quad\text{for $s\in\Nbb$}
  \end{align*}
  by a suitable choice of the parameter $s$.
\end{remark}

In this paper we present more direct proofs of Theorems~\ref{th:sqrt},~\ref{th:sqrt-ii}, and~\ref{th:eloge}  without
deducing preliminary infinite families of non-Shannon--type inequalities for entropy.

\smallskip

Another result proved in this paper reveals more subtle properties of probability distributions.
It can be presented as the following alternative. If in a six-tuple of random variables $(X,Y,A,B,C,D)$ the four quantities 
\[
 I(X:A|Y), \ I(Y:A|X), \  I(X:B|Y), \ I(Y:B|X)
\]
are negligibly small, then one of the two properties must be true: 
a stronger version of the Ingleton inequality for $(X,Y,A,B)$, 
\[
  I(X:Y) \textrm{ is much less than } I(X:Y|A) + I(X:Y|B) + I(A:B),   
\]
or a slightly weaker version of the Ingleton inequality for
$(X,Y,C,D)$, 
\[
  I(X:Y)\   \textrm{ is not much bigger than } I(X:Y|C) + I(X:Y|D) + I(C:D) .
\]
Here is the exact statement.  
\begin{theorem}[name=,restate=thsubtler]
  \label{th:subtler}
If the joint distribution $(X,Y,A,B,C,D)$ satisfies
\[
  \max\{I(X:A|Y), I(Y:A|X), I(X:B|Y), I(Y:B|X)\}
  \leq \epsilon\cdot I(X:Y)
\]
Then
\begin{enumerate}[label=(\alph*)]
\item\label{th:subtler-a}
  The following alternative holds: for every threshold
  $0< t < 1$  we have a \emph{stronger} version of Ingleton's
  inequality for $(A,B,X,Y)$
  \[
    I(X:Y) \le I(X:Y|A) + I(X:Y|B) + I(A:B)+
    \big(O(\sqrt{\epsilon})-t\big)I(X:Y) 
  \]
  or a \emph{weaker} version of Ingleton's inequality for $(X,Y,C,D)$
  \[
    I(X:Y) \le I(X:Y|C) + I(X:Y|D) + I(C:D) +
    \big(O(\sqrt{\epsilon})+t\big)I(X:Y)
  \]
\item\label{th:subtler-b}
  Besides, we have the following inequality for six random variables
  \begin{align*}
    \big(2-O(\sqrt{\epsilon})\big) \cdot I(X:Y)
    &\le
      I(X:Y|A) + I(X:Y|B) + I(A:B)\\
    &\;\;+
      I(X:Y|C) + I(X:Y|D) + I(C:D) 
  \end{align*}
\end{enumerate}
\end{theorem}
Note that the assumptions in the theorem above do not involve
variables $C,D$ in any capacity, so that the conclusion of the
theorem remains true for arbitrary $C$, $D$.%

\smallskip

While Theorems~\ref{th:sqrt},~\ref{th:sqrt-ii}, and~\ref{th:eloge} admit proofs based on explicitly derived infinite families of non-Shannon-type inequalities, 
for Theorem~\ref{th:subtler} (proved in Section~\ref{s:subtler}) 
we are are not aware of any alternative proof that differs substantially from our approach.

\smallskip

In the next section we give a brief overview of our technique.

\subsection{Methods and ideas}

This article presents more direct proofs of
Theorems~\ref{th:sqrt},~\ref{th:sqrt-ii} (Section~\ref{s:sqrt}), and Theorem~\ref{th:eloge}  (Section~\ref{s:eloge}) without
deducing preliminary infinite families of non-Shannon--type
inequalities for entropy.  
This approach offers, in our view, several
noteworthy advantages.  To begin with, it is conceptually more
intuitive, as it is grounded in certain structural properties of
probability distributions and in the intuition of \emph{materializable
  common information}, which we elaborate on below.  This perspective
enables us to reveal and prove properties that are stronger than a
mere conditional form of the Ingleton inequality, as in 
Theorem~\ref{th:subtler}.  In addition, the structural property
we identify appears particularly amenable to computer-assisted proof
systems.  We believe that this may enable computer-assisted  
discoveries of new conditional information inequalities, in much the
same spirit as non-Shannon--type inequalities and their applications
have been established in~\cite{dougherty2011non} and especially in
\cite{farras2018improving,gurpinar2019use}.

We now outline the \emph{structural property} of a distribution that is
at the core of all our proofs.  Somewhat informally, it states that if
$I(X:A|Y)$ and $I(Y:A|X)$ are sufficiently small, then a part of the
mutual information between $X$ and $Y$ can be ``materialized'' in a
sense analogous to the notion of Gács--K\"orner common information
\cite{gacs1973common}.  This means that there exists a random variable
$W$ that captures a portion of the mutual information between $X$ and
$Y$. On the one hand, this $W$ has negligible conditional entropy
given either $X$ or $Y$; on the other hand, $W$ retains a significant
amount of information---conditioned on $W$, the mutual information
between $(X,Y)$ and $A$ becomes very small. 
This idea is formalized in Lemma~\ref{l:Wk-bounds} below, which is our main technical tool. 

Before Lemma~\ref{l:Wk-bounds}  is formulated, we need to introduce  two notions:  \emph{symmetry} among random variables 
 and \emph{tropical probability spaces}.
We begin with  the symmetry.
Suppose that a tuple
  $(X,Y,A,B,Z)$ is such  that the variables $A$ and $B$ take value in
  the same alphabet and such that the joint distribution $p$ of the
  tuple satisfies
  \[
    p(x,y,a,b,z)=p(x,y,b,a,z)
  \]
  for all samples $x,y,z$ in the alphabets of $X,Y,Z$, respectively,   and all $a,b$ in the (shared) alphabet of $A$ and $B$. 
  In that case we say that the tuple is \emph{symmetric} with respect to transposing
  $A$ and $B$ or variables $A$ and $B$ are symmetric within the
  tuple.

Tropical probability spaces were introduced in~\cite{matveev2018asymptotic}. 
\emph{Tropicalization} associates to a random variable $X$ (or a tuple of random variables) a sequence
  $X^{tr}:=(X^{(n)})_{n\in\Nbb_{0}}$, whose elements behave, in an appropriate sense, like the joint distribution composed of $n$ independent copies of $X$.
  The entropy
  of such a  \emph{tropical} random variable is defined as
  \[
    H(X^{tr}):=\lim_{n\to\infty}\tfrac1n H\big(X^{(n)}\big)
  \]
  with the obvious generalization for tuples and entropic expressions on tuples.
   This is similar to the \emph{tensorization} trick, which is widely
   used in analysis, geometry, probability, and related areas.  
   In Section~\ref{s:prelim-trop} we discuss the technique of tropicalization in more detail.
 
  One of the advantages of tropicalization is that any relation between entropic quantities 
  that holds for classical random variables up to a sublinear error holds exactly in the tropical setting. 
  From now on we omit the superscript $tr$ and simply refer to such objects as \emph{tropical}. 
  Properties of tuples are understood element-wise along the tropical sequence.
  For example, a tropical pair $(X,Y)$ is called symmetric if the pair $\big(X^{(n)},Y^{(n)}\big)$ is symmetric for all sufficiently large $n$. 
\begin{lemmalpha}[name=,restate=lWkbounds]
  \label{l:Wk-bounds}
  Suppose a tropical tuple $(X,Y,A,B,\dots)$ 
  is such that for some $\epsilon>0$
  \[
    \max\set{I(X:A|Y),I(Y:A|X)}\leq\epsilon\cdot I(X:Y)
  \]
  Then there exists a sequence of tropical random variables 
  $W_{i}$, $i\in
  \Nbb_{0}$ such that
  \begin{enumerate}[label=(\alph*)]
  \item  \label{l:Wk-bounds-wxy}
    $H(W_{i}|X,Y)=0$
  \item \label{l:Wk-bounds-I}
    $I(X:Y|W_{i+1})  = I(X:Y|W_{i},A) $ 
  \item \label{l:Wk-bounds-wx}
    $\max\set{H(W_{i}|X),H(W_{i}|Y)}\leq i\cdot\epsilon\cdot I(X:Y)$
  \end{enumerate}
  Moreover, if the variables $A$ and $B$ were symmetric in the
  original distribution $(X,Y,A,B,\ldots)$, then we may assume that
  this property is preserved for the extended distribution.
\end{lemmalpha}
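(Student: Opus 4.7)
I would construct the sequence $(W_i)_{i\in\Nbb_{0}}$ inductively, starting with $W_0$ a constant (for which (a)--(c) hold trivially). Suppose $W_i$ has been constructed. Since $W_i$ is a function of $(X,Y)$, the hypothesis on $A$ transfers cleanly to the enriched tuple: by the chain rule,
\[
I(A:X\mid Y, W_i) = I(A:X\mid Y) - I(A:W_i\mid Y) \le \epsilon\cdot I(X:Y),
\]
and symmetrically $I(A:Y\mid X, W_i) \le \epsilon\cdot I(X:Y)$.

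The central move is to invoke a conditional form of the Ahlswede--Körner lemma in the almost entropic closure, applied to the pair $((X,Y),A)$ relative to $W_i$. This yields a random variable $V$ (existing in $\bar\Gamma^{\ast}$ as a limit of block constructions) satisfying (i) $H(V\mid X,Y,W_i)=0$, (ii) $I((X,Y):A\mid V,W_i)=0$, and (iii) $H(V\mid W_i)=I((X,Y):A\mid W_i)$. I would then set $W_{i+1}:=(W_i,V)$; property (a) is immediate from (i).

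For (c), a short computation from (ii)--(iii) yields the refined identity $I((X,Y):V\mid A,W_i)=0$ (by writing $H(V\mid W_i)=I((X,Y):V\mid W_i)$ and expanding $I((X,Y):A,V\mid W_i)$ two ways). This forces $I(X:V\mid W_i)=I(X:A\mid W_i)$ and therefore
\[
H(V\mid X,W_i) = H(V\mid W_i)-I(X:V\mid W_i) = I(Y:A\mid X,W_i) \le \epsilon\cdot I(X:Y),
\]
and symmetrically $H(V\mid Y,W_i)\le\epsilon\cdot I(X:Y)$, whence $\max\{H(W_{i+1}\mid X),H(W_{i+1}\mid Y)\}\le (i+1)\epsilon\cdot I(X:Y)$ by the chain rule. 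For (b), I would use the identity
\[
I(X:Y\mid W_i)-I(X:Y\mid V,W_i)=H(V\mid W_i)-H(V\mid X,W_i)-H(V\mid Y,W_i),
\]
valid whenever $V$ is a function of $(X,Y,W_i)$, substitute the three entropy values just computed, and simplify via the standard identity $I(X:A\mid W_i)-I(X:A\mid Y,W_i)=I(X:Y\mid W_i)-I(X:Y\mid A,W_i)$ to conclude $I(X:Y\mid W_{i+1})=I(X:Y\mid W_i,A)$.

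The symmetry claim is automatic: since $V$ is obtained as a function of $(X,Y,W_i)$ and (by induction) $W_i$ already respects any $A\leftrightarrow B$ symmetry of the original distribution, the extended entropy profile inherits the symmetry. The principal obstacle is the invocation of Ahlswede--Körner in $\bar\Gamma^{\ast}$: a variable $V$ with the exact properties (i)--(iii) does not in general exist in the original distribution, and must be built in the almost entropic closure via a block/limit argument; everything else reduces to entropy arithmetic.
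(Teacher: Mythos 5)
Your inductive scheme is conceptually related to the paper's construction but differs in an essential way, and the key step does not hold. The paper does not iterate an Ahlswede--Körner step; it first applies the Copy Lemma $k$ times to produce $\tilde A_{k}=(A'_{1},\ldots,A'_{k})$, a block of conditionally i.i.d.\ copies of $A$ given $(X,Y)$, and only then takes a single reduction $W_{k}=\AK(\tilde A_{k}\mid X,Y,A,B,\ldots)$. Your proposal replaces this by a ``conditional AK'' applied directly to $A$ at each stage. The genuine gap is your property (ii), $I\bigl((X,Y):A\mid V,W_{i}\bigr)=0$. The Ahlswede--Körner reduction, as stated in Lemma~\ref{l:ak}, constrains only $H(V\mid X,Y,W_i)=0$ and $H(X_{J}\mid V,W_i)=H(X_{J}\mid A,W_i)$ for $J$ ranging over subsets of the subcollection $(X,Y)$; it says nothing about the joint entropy of $(X,Y)$ with \emph{both} $V$ and $A$, so $I(X,Y:A\mid V,W_i)$ is not controlled. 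In fact, (i)+(ii)+(iii) together are far too strong: they force
\[
  I(X,Y:V\mid A,W_i)=0,
  \qquad\text{hence}\qquad
  H(V\mid A,W_i)=H(V\mid A,X,Y,W_i)+I(X,Y:V\mid A,W_i)=0,
\]
so $V$ would simultaneously be a function of $(X,Y,W_i)$ and a function of $(A,W_i)$, i.e.\ a Gács--Körner common-information variable of $(X,Y)$ and $A$ given $W_i$, with $H(V\mid W_i)=I(X,Y:A\mid W_i)$. For generic distributions (already for a binary $X=Y$ and $A$ a noisy copy of $X$) the Gács--Körner common information is strictly below the mutual information, including asymptotically in $\bar\Gamma^{\ast}$, so such a $V$ does not exist.

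Interestingly, the entropy values you deduce from (ii), namely $H(V\mid X,W_i)=I(Y:A\mid X,W_i)$ and $H(V\mid Y,W_i)=I(X:A\mid Y,W_i)$, \emph{do} follow directly from the actual AK conclusion $H(X\mid V,W_i)=H(X\mid A,W_i)$, $H(Y\mid V,W_i)=H(Y\mid A,W_i)$, $H(X,Y\mid V,W_i)=H(X,Y\mid A,W_i)$, without invoking (ii) at all, and your identity
\[
  I(X:Y\mid W_i)-I(X:Y\mid V,W_i)=H(V\mid W_i)-H(V\mid X,W_i)-H(V\mid Y,W_i)
\]
together with $I(X:A\mid W_i)-I(X:A\mid Y,W_i)=I(X:Y\mid W_i)-I(X:Y\mid A,W_i)$ then yields (b) correctly. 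So the entropy arithmetic can be salvaged by dropping (ii) and using the AK conclusion directly. What cannot be salvaged so easily is the \emph{moreover} clause: since your $V$ is $\AK(A\mid\ldots)$, its joint distribution with $A$ and with $B$ are genuinely different, so the entropy profile of $(X,Y,A,B,\ldots,V)$ is not symmetric under $A\leftrightarrow B$; the ``automatic'' inheritance you claim does not hold. The paper avoids this by applying the Copy Lemma to $A$ \emph{given $(X,Y)$}: under $A\leftrightarrow B$ symmetry the conditional laws of $A$ and $B$ given $(X,Y)$ coincide, so the copies $A'_i$ (and hence $\tilde A_k$ and $W_k$) are symmetric by construction, and Corollary~\ref{c:symmetrization} then upgrades entropy-profile symmetry to distributional symmetry. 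This is the structural reason the Copy Lemma is not optional in the paper's proof.
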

\begin{corollaryalpha}[name=,restate=corWbound]
  \label{cor:W-bound}
  Suppose a tropical tuple $(X,Y,A,B,\dots)$
  is such that for some $\epsilon>0$
  \[
    \max\set{I(X:A|Y),I(Y:A|X)}\leq\epsilon\cdot I(X:Y)
  \]
  Then there exists a tropical $W$ such that
   \begin{enumerate}[label=(\alph*)]
  \item \label{cor:W-bound-a}
    $H(W | X,Y)=0$
  \item \label{cor:W-bound-b}
    $I(X,Y:A| W )  = O(\sqrt{\epsilon}) \cdot I(X:Y)$ 
  \item \label{cor:W-bound-c}
    $\max\set{H(W | X),H(W | Y)}= O(\sqrt{\epsilon}) \cdot I(X:Y)$
  \end{enumerate}
\end{corollaryalpha}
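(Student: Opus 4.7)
The plan is to take $W = W_{i^{*}}$ for a well-chosen index $i^{*}$ in the sequence produced by Lemma~\ref{l:Wk-bounds}. Property (a) of the corollary is inherited directly from property (a) of the lemma, and property (c) of the corollary follows from property (c) of the lemma as soon as $i^{*} = O(1/\sqrt{\epsilon})$. The content of the proof is therefore to arrange that property (b) also holds for some such index.

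Expanding via the chain rule,
\[
I(X,Y:A \mid W_{i}) \;=\; I(X:Y \mid W_{i}) \;-\; I(X:Y \mid A, W_{i}) \;+\; I(X:A \mid Y, W_{i}) \;+\; I(Y:A \mid X, W_{i}).
\]
Property (b) of the lemma rewrites $I(X:Y \mid A, W_{i})$ as $I(X:Y \mid W_{i+1})$, so the first two summands combine into the telescoping difference $I(X:Y \mid W_{i}) - I(X:Y \mid W_{i+1})$. The remaining two summands are small because $W_{i}$ is a function of $(X,Y)$: the identity $I(X, W_{i}:A \mid Y) = I(X:A \mid Y) + I(W_{i}:A \mid X,Y) = I(X:A \mid Y)$ yields $I(X:A \mid Y, W_{i}) \le I(X:A \mid Y) \le \epsilon\,I(X:Y)$, and symmetrically $I(Y:A \mid X, W_{i}) \le \epsilon\,I(X:Y)$.

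It remains to locate an index $i^{*}$ at which the telescoping difference is small. Since $W_{0}$ is a common function of $X$ and $Y$ by property (c) at $i=0$, one has $I(X:Y \mid W_{0}) = I(X:Y) - H(W_{0}) \le I(X:Y)$, while $I(X:Y \mid W_{K}) \ge 0$, so
\[
\sum_{i=0}^{K-1} \bigl[I(X:Y \mid W_{i}) - I(X:Y \mid W_{i+1})\bigr] \;\le\; I(X:Y).
\]
By pigeonhole some $i^{*} < K$ satisfies $I(X:Y \mid W_{i^{*}}) - I(X:Y \mid W_{i^{*}+1}) \le I(X:Y)/K$. Choosing $K = \lceil 1/\sqrt{\epsilon}\rceil$ balances this bound against the bound $K\epsilon \cdot I(X:Y)$ coming from property (c), yielding both conclusions (b) and (c) of the corollary at the common rate $O(\sqrt{\epsilon})\cdot I(X:Y)$.

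The main obstacle is the identity expressing $I(X,Y:A \mid W_{i})$ in the telescoping form above, together with verifying that $I(X:A \mid Y, W_{i})$ and $I(Y:A \mid X, W_{i})$ do not exceed their unconditioned counterparts; both points rely crucially on $H(W_{i} \mid X,Y) = 0$. Once this structural identity is in place, the rest is routine bookkeeping and the pigeonhole balance above.
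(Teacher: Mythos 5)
Your proposal is correct and follows essentially the same route as the paper's proof: both decompose $I(X,Y:A\mid W_i)$ into the telescoping piece $I(X:Y\mid W_i)-I(X:Y\mid W_{i+1})$ plus the two $\epsilon$-small conditional mutual informations (using $H(W_i\mid X,Y)=0$ crucially in both places), and both balance the telescoping bound against the growth in Lemma~\ref{l:Wk-bounds}\ref{l:Wk-bounds-wx} at $k\approx 1/\sqrt{\epsilon}$. The only cosmetic differences are that you locate $i^*$ by averaging over a fixed window of length $K$ whereas the paper takes the minimal $k$ for which the gap drops below $\delta\cdot I(X:Y)$, and that you use the clean bound $I(X:A\mid Y,W_i)\le I(X:A\mid Y)$ while the paper allows a slightly looser bound with an extra $H(W_i\mid X)$ term; neither difference affects the argument.
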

This implies immediately another corollary:
\begin{corollaryalpha}
  \label{cor:W-bound-0}
Lemma~\ref{l:studeny} holds for almost-entropic points.
\end{corollaryalpha}

The tropicalization procedure provides a
convenient “interface’’ that hides certain technical steps --- such as
scaling up the distribution and controlling sublinear slack terms --- and
thereby allows us to give a cleaner and more usable formulation of the
technical Lemma~\ref{l:Wk-bounds} and
Corollary~\ref{cor:W-bound}.  
By contrast, our main results
(Theorems~\ref{th:sqrt},~\ref{th:sqrt-ii},~\ref{th:eloge},
and~\ref{th:subtler}), which rely on Lemma~\ref{l:Wk-bounds} and
Corollary~\ref{cor:W-bound}, can be understood in the more usual way:
they apply directly to conventional tuples of random variables.

We prove  Lemma~\ref{l:Wk-bounds} and
Corollary~\ref{cor:W-bound} in Section~\ref{s:construction}.
The proofs  may be viewed as compactly capturing the
underlying construction originally due to Mat\'u\v s.  In particular,
the proof of Corollary~\ref{cor:W-bound} requires an \emph{a priori}
unbounded number of applications of the Copy Lemma.  This fact
effectively encodes an infinite family of non-Shannon--type information
inequalities, even though they never appear explicitly in the
argument.

At a high level, the proofs of Theorems~\ref{th:sqrt}
and~\ref{th:sqrt-ii} follow a similar pattern.  We first apply
Corollary~\ref{cor:W-bound}, and then use standard Shannon--type
inequalities on the original variables together with the auxiliary
variable $W$; see the full argument in Section~\ref{s:sqrt}.  
The proofs presented in this paper are entirely conventional in the
sense that we explicitly write down the necessary entropy
inequalities.  However, the same arguments can be generated or
mechanically verified with computer assistance.  In practice, one may
apply Corollary~\ref{cor:W-bound} in a suitable form, take the
properties (a)--(c) for the resulting auxiliary variable~$W$, include
all basic Shannon-type inequalities involving the variables under
consideration (including $W$), and then use a linear-programming
solver to deduce the desired conclusion.  Since the bounds in
Corollary~\ref{cor:W-bound} hold only up to $O(\sqrt{\epsilon})$, the
resulting inequality must likewise be understood up to an error term
of order~$O(\sqrt{\epsilon})$.

In the proof of  Theorem~\ref{th:eloge}  (see Section~\ref{s:eloge}) 
and Theorem~\ref{th:subtler} (see Section~\ref{s:subtler})
we combine Lemma~\ref{l:Wk-bounds} with symmetry considerations.

\section{Preliminaries}\label{s:prelim}
\subsection{Notation}
In what follows we use the following notation and conventions. We
denote $[n]:=\set{0,1,\dots,n-1}$ and $2^{[n]}$ will stand for the
power set of $\set{0,1,\dots,n-1}$.
In this paper, when we talk about \emph{random variables},  we always mean finitely supported random variable.
 For a collection of jointly
distributed random variables $(X_{0},\dots,X_{n-1})$ and
$J\in2^{[n]}$, we denote by $X_{J}$ the joint random variable
corresponding to the subcollection $(X_{i}\st i\in J)$. The entropy of
a random variable $X$ will be denoted $H(X)$.  
We use the standard notation for the conditioned
  entropy, the mutual information in pairs and triples of random
variables,
\begin{align*}
  &H(X|Y) := H(XY) - H(Y)\\
  &I(X:Y) := H(X)-H(X|Y),\\
  &I(X:Y|Z) := H(X|Z)-H(X|Y,Z),\\
  &I(X:Y:Z) := I(X:Y) - I(X:Y|Z),\\
  &I(X:Y:Z|W) := I(X:Y|W) - I(X:Y|W,Z).
\end{align*}

\subsection{Tropicalization/Tensorization}
\label{s:prelim-trop}
In this article, we systematically use the concept of
\emph{tropicalization}/\emph{tensorization} of random variables.  This
is a useful tool to investigate the asymptotic behavior of entropy
profiles and to prove entropic inequalities for individual tuples of
random variables.

The idea of tropicalization is the following. Given an $n$-tuple
$(X_{i}:i=0,\dots,n-1)$ of random variables, we replace it with the
sequence of its \textit{tensor powers}
$(X_{i}^{\otimes k}:i=0,\dots,n-1)_{k=1}^{\infty}$, where
$X^{\otimes k}$ stands for the joint of $k$ independent copies of $X$.
For such sequences we can define all the usual notions and operations,
such as entropy profile, joints, etc. by performing these operations
elementwise in the sequence and normalizing.

However, the class of tensor-power sequences as above is too restrictive,
and we relax the definition in the following way: we consider sequences
of $n$-tuples
\[
  (\Xbf^{(k)})_{k=1}^{\infty}=(X_{i}^{(k)}\st
  i=0,\dots,n-1)_{k=1}^{\infty}
\]
that do not deviate from tensor-powers-sequences too much, more
specifically
\[
  \dist(\Xbf^{(k)}\otimes\Xbf^{(l)},\Xbf^{(k+l)})\leq C\cdot\psi(k+l)
\]
where $\dist$ stands for the entropic distance between $n$-tuples of
random variables and $\psi$ is an \emph{admissible}%
\footnote{A function $\psi$ is \emph{admissible} if it is
    monotone and sufficiently slowly growing, more specifically
    $\int_{1}^{\infty}(\psi(t)/t^{2})\mathrm{d} t<\infty$ } %
function, and $C$ is some
positive constant that may depend on the sequence
  $(\Xbf^{(k)})_{k\in\Nbb}$. For example, we can take
$\psi(t):=t^{3/4}$ and fix this choice in this article;
see~\cite{matveev2020tropical} for details.

We say that two such sequences $\Xbf^{(k)}=(X_{i}^{(k)}\st i=0,\dots,n-1)$ and
$\Ybf^{(k)}=(Y_{i}^{(k)}\st i=0,\dots,n-1)$ are equivalent if
\[
  \lim_{k\to\infty}\frac1k\dist(\Xbf^{(k)},\Ybf^{(k)})=0
\]
We call equivalence classes of
such sequences \textit{tropical $n$-tuples of random variables}.

From now on, when we say a (tuple of) random variables we mean tropical
(tuple of) random variables.  Entropy of such a tropical random
variable is defined as the limit of scaled entropies of individual
members of the sequence.
Other properties of tropical tuples, such as symmetry, are to be
  understood in the sense that they apply to the tropical sequence
  element-wise for sufficiently large $n$

\begin{remark}\label{r:trop2class2trop}
  Two observations are in order.
  \begin{enumerate}[label=(\roman*)]
  \item First of all, any linear inequality that is valid for
    entropies for all tropical random variables implies the same
    inequality for entropies of conventional random variables.
  \item Secondly, if every component of a tropical tuple obeys a given
    linear inequality with a sublinear error term, e.g., up to a constant independent of $k$, 
     then the tropical tuple itself satisfies this
    inequality exactly (with no error).
  \end{enumerate}
  Thus, these ``tropical''
  objects can be interpreted as a helpful tool to study properties of
  the more conventional objects such as probability distributions.
\end{remark}

\subsection{Symmetries in distribution and tropical random variables} 
\label{ss:symmetries}
In this section we show how to transform any tropical tuple
$(X_{0},\dots,X_{n-1},A,B)$ into another tuple
$(X_{0},\dots,X_{n-1},\Abar,\Bbar)$ that is symmetric with respect to
the transposition of $\Abar$ and $\Bbar$.  This observation will allow us
in some cases to assume, without loss of generality, that the
distributions involved in the argument satisfy the condition of
symmetry.
\begin{lemma}
  \label{l:symmetrization}
  For every tropical tuple $(X_{0},\dots,X_{n-1},A,B)$ there
  exists another tropical tuple $(X_{0},\dots,X_{n-1},\Abar,\Bbar)$
  such that
  \begin{itemize}
  \item the probability distributions in the new tuple are symmetric
    with respect to the transposition of $\Abar$ and $\Bbar$, and
  \item for any $J\in2^{[n]}$ one has
    \begin{align*}
      &H(X_{J},\Abar)
      =
        H(X_{J},\Bbar)=\frac12\big(H(X_{J},A)+H(X_{J},B)\big)
      &\text{and}\\ 
      &H(X_{J},\Abar,\Bbar)
      = H(X_{J},A,B)
    \end{align*}
  \end{itemize}
  Thus, any linear combination of entropies of
  $(X_{0},\dots,X_{n-1},A,B)$ that is symmetric with respect to
  swapping $A$ and $B$ will remain unchanged when $(A,B)$ is replaced
  by $(\Abar,\Bbar)$.
  \end{lemma}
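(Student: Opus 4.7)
The plan is to build the symmetrized tuple via a tensor-swap of two copies of the original followed by the tropical rescaling. At level $k$ of the tropical tuple I would take (approximately) $2k$ independent copies of the original distribution, indexed by $j = 1, \ldots, 2k$, and set
\begin{align*}
\bar X_i^{(k)} &:= (X_i^{(1)}, \ldots, X_i^{(2k)}),\\
\bar A^{(k)} &:= (A^{(1)}, \ldots, A^{(k)}, B^{(k+1)}, \ldots, B^{(2k)}),\\
\bar B^{(k)} &:= (B^{(1)}, \ldots, B^{(k)}, A^{(k+1)}, \ldots, A^{(2k)}).
\end{align*}
A short calculation using additivity of entropy over independent blocks gives
\[
H\bigl(\bar X_J^{(k)}, \bar A^{(k)}\bigr) = k\cdot H(X_J, A) + k\cdot H(X_J, B),
\quad
H\bigl(\bar X_J^{(k)}, \bar A^{(k)}, \bar B^{(k)}\bigr) = 2k\cdot H(X_J, A, B),
\]
and analogous formulas with $\bar B^{(k)}$ in place of $\bar A^{(k)}$. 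After dividing by the tropical scale factor $2k$, the identities in the lemma hold exactly; in particular the normalized entropy of any joint of $\bar X_i^{(k)}$'s matches the corresponding entropy for the original $X_i$'s, so the $X_0, \ldots, X_{n-1}$ portion of the tropical tuple is genuinely preserved.

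To promote symmetry of the entropy profile to literal distributional symmetry under $\bar A \leftrightarrow \bar B$, I would include an auxiliary fair bit $\sigma^{(k)}$ independent of everything and use it to globally transpose the labels $\bar A^{(k)}$ and $\bar B^{(k)}$ with probability $1/2$. After marginalizing $\sigma^{(k)}$, the distribution on the new tuple is symmetric by construction, while the bit itself contributes at most $\log 2$ to every entropy --- sublinear in $k$ and therefore invisible after tropical normalization.

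The remaining step, which I expect to be routine bookkeeping rather than a genuine obstacle, is to verify that the sequence $(\bar{\Xbf}^{(k)})_{k \geq 1}$ satisfies the near-multiplicativity condition $\dist(\bar{\Xbf}^{(k)} \otimes \bar{\Xbf}^{(l)}, \bar{\Xbf}^{(k+l)}) = O(\phi(k+l))$ required by the definition of tropical tuple in Section~\ref{s:prelim-trop}. Both distributions are built from $2(k+l)$ i.i.d.\ copies of the original, arranged in different block patterns; since the underlying copies are exchangeable, all joint entropies match exactly, and the only slack comes from the extra $\sigma$-bits --- an additive constant uniformly bounded in $k + l$, and hence well within the $\phi$-tolerance.
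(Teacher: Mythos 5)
Your proof is correct and shares the essential idea with the paper's, but with an unnecessary intermediate step. The step that actually does the work is the fair-coin randomization: replacing $(\bar A,\bar B)$ with either $(\bar A,\bar B)$ or $(\bar B,\bar A)$, each with probability $1/2$, yields \emph{literal} distributional symmetry, and costs at most $\log 2$ per entropy term --- sublinear, hence invisible after tropical normalization. That step alone, applied to each member of the original tropical sequence, is precisely the paper's proof and already produces every entropy identity of the lemma up to an $O(1)$ error. Your preliminary tensor-doubling with a block swap of the $A$- and $B$-slots is therefore redundant: by itself it only symmetrizes the \emph{entropy profile}, not the distribution, and once you append the random swap you obtain both symmetries with or without the block doubling. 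The block swap does let you see the entropy identities exactly before adding the coin, which is a tidy feature, but it buys nothing once the $O(1)$ slack from the coin is present anyway.

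Two smaller imprecisions. First, you describe level $k$ of the new tuple as ``$2k$ independent copies of the original distribution,'' which tacitly assumes the starting tropical tuple is a pure tensor-power sequence. For a general tropical tuple you should instead take two independent copies of the $k$-th member $\Xbf^{(k)}$ (normalizing by $2k$); the near-multiplicativity you need then inherits the admissible error of the original sequence plus an $O(1)$ for the coin, which is within the $\phi$-tolerance. Second, the paper's exact formula $H(X_J,\Abar)=\tfrac12\bigl(H(X_J,A)+H(X_J,B)\bigr)+\log 2$ uses that the alphabets $\Acal$ and $\Bcal$ are taken disjoint, so the coin's value is recoverable from $\Abar$. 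You do not state this; it does not affect the final conclusion (since the error stays $O(1)$ regardless), but it is the reason the paper can write an equality rather than a two-sided bound.
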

\begin{proof}
  Let $(X_{0},\dots,X_{n-1},A,B)$ be an $(n+2)$-tuple of random
  variables (not tropicalized).  Denote by $\Acal$ and $\Bcal$
  the alphabets of $A$ and $B$, respectively, and by $p$ the joint
  distribution of the whole $(n+2)$-tuple.  We tacitly assume that
  $\Acal$ and $\Bcal$ are disjoint. Define two new random variables
  $\Abar$ and $\Bbar$ jointly distributed with $X_{i}$'s, each taking
  values in $\Acal\cup\Bcal$ and distributed according to the law
  $\pbar$ defined by
  \[
    \pbar(x_{0},\dots,x_{n-1},a,b)=
    \begin{cases}
      \frac12 p(x_{0},\dots,x_{n-1},a,b)
      &
        \text{if $(a,b)\in\Acal\times\Bcal$}
      \\
      \frac12 p(x_{0},\dots,x_{n-1},b,a)
      &
        \text{if $(b,a)\in\Acal\times\Bcal$}
      \\
      0&\text{otherwise}
    \end{cases}
  \]
  Distribution $\pbar$ is symmetric with respect to swapping $\Abar$
  and $\Bbar$ and for any $J \in 2^{[n]}$ holds
  \[
    H(X_{J},\Abar)=H(X_{J},\Bbar)=\frac12(H(X_{J},A)+H(X_{J},B))+\log2
  \]
  and
  \[
    H(X_{J},\Abar,\Bbar)=H(X_{J},A,B)+\log2
  \]
  Thus, we symmetrize the distribution, while changing any $(A,B)$-symmetric
  entropic expressions by $O(1)$, uniformly with respect to the data.
  Given a tropical tuple we can perform the above procedure elementwise
  in the sequence of tuples, thereby obtaining exact equalities for the
  symmetric entropic expressions, see Remark~\ref{r:trop2class2trop}.
\end{proof}

\subsection{Copy lemma}\label{ss:copy-lemma}
In this section, we recall the classical \emph{Copy Lemma}.
Essentially, it states that for every distribution $(X, Y, A)$, one
can construct an extended distribution $(X, Y, A,A')$ such that the
conditional distribution of $A'$ given $X$ coincides with that of
$A$ given $X$, and moreover, the random variables form the Markov
chain
\[
  (Y, A) \to X \to A'.
\]
In other words, $A'$ is an ``independent copy'' of $A$ that preserves
the same joint distribution with $X$.  The lemma extends to the
setting where $X$ and $Y$ are not individual random variables but
jointly distributed tuples.

Despite its simplicity, this lemma remains a powerful tool in proving
non-Shannon--type information inequalities for entropy. This
construction appeared implicitly in the original work of Zhang and
Yeung~\cite{zhang1997non,zhang1998characterization}, and was later
formulated explicitly in~\cite{dougherty2006six}.

\begin{lemma}[Copy Lemma]
  Every tuple of jointly distributed  random variables
  \[
    (X_{i},Y_{j},A\st i=0,\ldots,k-1;\;j=0,\dots,\ell-1)
  \] 
  can be extended to a longer tuple
  \[
    (X_{i},Y_{j},A,A'\st i=0,\ldots,k-1;\;j=0,\dots,\ell-1)
  \] 
  such that $(X_{0},\ldots,X_{k-1},A)$ and $(X_{0},\ldots, X_{k-1},A')$
  have the same distribution, and $A'$ is independent of
  $(Y_0,\ldots,Y_{\ell-1},A)$ conditioned on $(X_0,\ldots, X_{k-1})$.
  
  Moreover, if the original distribution of
  $(X_{[k]},Y_{[\ell]},A) $ was symmetric with respect to some
  permutation of variables $(Y_{j},A)$ and/or permutation of variables
  $(X_{i})$, then the extended distribution of
  $(X_{[k]},Y_{[\ell]},A,A')$ will possess the same symmetries.
\end{lemma}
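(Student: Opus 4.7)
The plan is a direct construction of the extended joint distribution. Let $p(x_{[k]}, y_{[\ell]}, a)$ denote the original joint distribution. I would define
\[
  \tilde p(x_{[k]}, y_{[\ell]}, a, a') := p(x_{[k]}, y_{[\ell]}, a) \cdot p_{A \mid X_{[k]}}(a' \mid x_{[k]}),
\]
where $p_{A \mid X_{[k]}}(\,\cdot\, \mid x_{[k]})$ is the conditional law of $A$ given $X_{[k]}$ under $p$ (with the dummy variable relabeled $a'$). All of the required structure then follows from elementary marginalization.

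First I would verify that $(X_{[k]}, A')$ has the same joint law as $(X_{[k]}, A)$: summing out $y_{[\ell]}$ and $a$ on the left yields $p(x_{[k]}) \cdot p_{A \mid X_{[k]}}(a' \mid x_{[k]})$, which is exactly the formula for the original marginal $p(x_{[k]}, a)$ up to relabeling. Next, dividing by $\tilde p(x_{[k]}) = p(x_{[k]})$ gives
\[
  \tilde p(y_{[\ell]}, a, a' \mid x_{[k]}) = p(y_{[\ell]}, a \mid x_{[k]}) \cdot p_{A \mid X_{[k]}}(a' \mid x_{[k]}),
\]
which exhibits the desired conditional independence of $A'$ from $(Y_{[\ell]}, A)$ given $X_{[k]}$.

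For the symmetry claim, I would argue as follows. If $\sigma$ permutes the indices $[k]$ and preserves $p$, then both factors in the definition of $\tilde p$ are manifestly invariant under $\sigma$. Given a permutation $\pi$ of $\{Y_0,\ldots,Y_{\ell-1}, A\}$ that preserves $p$, I would extend $\pi$ to $\{Y_0,\ldots,Y_{\ell-1}, A, A'\}$ by letting it fix $A'$. The first factor of $\tilde p$ is invariant by hypothesis, while the second factor depends only on $X_{[k]}$ and $a'$, both fixed by the extension, and is therefore also invariant.

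The main subtlety, and the step I expect to require a moment's care, is the case in which $\pi$ swaps $A$ with some $Y_j$: the construction singles out $A$ through the conditional $p_{A \mid X_{[k]}}$, so one might worry that the extension is intrinsically asymmetric in $A$. However, invariance of $p$ under such a swap forces $p_{A \mid X_{[k]}}(\,\cdot\, \mid x) = p_{Y_j \mid X_{[k]}}(\,\cdot\, \mid x)$ as functions of $x$, so the apparent asymmetry is illusory and the argument in the previous paragraph goes through unchanged.
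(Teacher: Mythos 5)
Your construction is identical to the paper's (the paper writes the product $p(x,y,a)\,p(x,a')/p(x)$, which is your $p(x,y,a)\,p_{A\mid X}(a'\mid x)$), and the marginalization checks are the standard ones the paper leaves implicit. Your symmetry discussion merely spells out what the paper dismisses as following "directly from the definition," and your worry about $p_{A\mid X}$ singling out $A$ is in fact already resolved by your preceding paragraph (the second factor depends only on $x$ and $a'$, neither of which is permuted), so the final paragraph is harmless but redundant.
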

\begin{proof}
  The new variable $A'$ will have the same alphabet as $A$, and the new
  probability distribution can be defined by the following
  equation: for all instances $a$ and $a'$ of $A$ and $A'$,
  respectively,
  \begin{align*}
    \pr[(X_{[k]},&Y_{[\ell]},A,A') =  (\xbf,\ybf,a,a')]=\\
    &=
    \begin{cases}
      \frac{
      \pr[(X_{[k]},Y_{[\ell]},A) = (\xbf,\ybf,a)] \cdot
      \pr[(X_{[k]},A) = (\xbf,a')]
      }
      {\pr[X_{[k]} = \xbf]}
      & \text{if $\pr[X_{[k]} =  \xbf]>0$,}
      \\
      0 & \text{otherwise}.
    \end{cases}
  \end{align*}
  where $\xbf=(x_{0},\dots,x_{k-1})$, $\ybf=(y_{0},\dots,y_{\ell-1})$ are
  instances of $X_{[k]}$ and $Y_{[\ell]}$, respectively.
  
  The \emph{moreover} claim of the lemma follows directly from
  the definition of the new distribution.
\end{proof}
The entropy profile of the extended tuple
$(X_{[k]},Y_{[\ell]},A, A')$ is not totally determined by the
entropies of the original distribution $(X_{[k]},Y_{[\ell]},A)$.
However, some components of the new entropy profile are determined by
the definition of $A'$: for all $I\subset [k]$ and
$J\subset [\ell]$
\begin{align*}
  H(X_{I},A') &= H(X_{I},A),\\
  H(Y_{J},A,A'\mid X_{[k]}) &= H(Y_{J},A\mid X_{[k]})+H(A' \mid X_{[k]}).
\end{align*}

\subsection{Ahlswede--K\"orner reduction}
The Ahlswede--K\"orner reduction modifies a given tropical $n$-tuple so
that some entropies in the new tuple vanish, while some (combinations
of) other entropies stay the same.
\begin{lemma}[Ahlswede--K\"orner reduction]
\label{l:ak}
Let us consider a tropical $n$-tuple $(X_{i}:i=0,\dots,n-1)$ and a subcollection
$(X_{i}:i=0,\dots,k-1)$ of it, where $k< n$. Then there exists an
$(n+1)$-tuple $(X_{0},\dots,X_{n-1},X'_{n-1})$, such that
\begin{equation*}
  \begin{cases}
    H(X'_{n-1}|X_{0},\dots,X_{k-1})=0\\
    H(X_{J}|X'_{n-1})=H(X_{J}|X_{n-1})&\text{for any $J\in2^{[k]}$}
  \end{cases}
\end{equation*}
We will call the new random variable $X'_{n-1}$ the Ahlswede--K\"orner
reduction of $X_{n-1}$ under $(X_{0},\dots,X_{k-1})$ and denote the new variable 
\[
  X'_{n-1}:=\AK(X_{n-1}\mid X_{0},\dots,X_{k-1}).
\]
\end{lemma}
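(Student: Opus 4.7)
The plan is to build $X'_{n-1}$ as the tropical Ahlswede--Körner common information of $X_{n-1}$ and the tuple $X_{[k]}:=(X_{0},\dots,X_{k-1})$. I would first invoke the tropical form of the Ahlswede--Körner theorem: for any pair of tropical random variables $(X,Y)$ there exists a tropical variable $U$ satisfying
\[
  H(U\mid X)=0,\qquad H(U)=I(X:Y),\qquad I(X:Y\mid U)=0.
\]
The standard construction lives in the $N$-fold tensor power and uses random binning: one covers the $X_{n-1}^{\otimes N}$-conditional typical set of $X_{[k]}^{\otimes N}$ by approximately $2^{N\cdot I(X_{[k]}:X_{n-1})}$ bins, arranged so that each conditional typical set is spread almost uniformly across them; the bin label becomes the finite-$N$ analogue of $X'_{n-1}$. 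Covering/packing estimates deliver the three displayed properties with $o(N)$ slack, and these slacks vanish after dividing by $N$ and passing to the tropical equivalence class. I take this $U$ as the candidate $X'_{n-1}$.

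Next, I would observe that these three properties automatically force $X'_{n-1}$ to be (tropically) also a function of $X_{n-1}$. Writing $X=X_{[k]}$, $Y=X_{n-1}$, $U=X'_{n-1}$, the triple-information identity gives
\[
  I(X:Y)=I(X:Y\mid U)+I(X:U)-I(X:U\mid Y),
\]
so $I(X:U\mid Y)=I(X:U)-I(X:Y)=H(U)-I(X:Y)=0$, using $H(U\mid X)=0$ to reduce $I(X:U)$ to $H(U)$. Then
\[
  H(U\mid Y)=H(U\mid X,Y)+I(X:U\mid Y)=0+0=0,
\]
since $H(U\mid X,Y)\le H(U\mid X)=0$. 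Thus $X'_{n-1}$ is a function of $X_{n-1}$ in addition to being a function of $X_{[k]}$.

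With both $H(X'_{n-1}\mid X_{[k]})=0$ and $H(X'_{n-1}\mid X_{n-1})=0$ in place, the stated identity is immediate. For any $J\subseteq[k]$, the Markov relation $X_J\perp X_{n-1}\mid X'_{n-1}$, inherited from $I(X_{[k]}:X_{n-1}\mid X'_{n-1})=0$, yields
\[
  H(X_J\mid X'_{n-1})=H(X_J\mid X'_{n-1},X_{n-1}),
\]
and because $X'_{n-1}$ is a function of $X_{n-1}$, the right-hand side equals $H(X_J\mid X_{n-1})$. Chaining the two completes the argument for all $J\in 2^{[k]}$ uniformly.

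The real work sits in the first step: upgrading the Ahlswede--Körner covering inequality, which holds at each finite tensor power only up to sublinear slack, to the exact identity $H(U)=I(X:Y)$ and the exact Markov relation $I(X:Y\mid U)=0$ in the tropical limit. This is the nontrivial analytic content of the lemma, and precisely what the tropical framework of~\cite{matveev2020tropical} is designed to absorb cleanly. Once that foundational tool is cited, the remainder of the argument is pure Shannon algebra on the finite collection of subsets $J\subseteq[k]$.
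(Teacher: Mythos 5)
The first step of your argument is incorrect, and the error is fatal. You claim a ``tropical Ahlswede--K\"orner theorem'' producing $U$ with $H(U\mid X)=0$, $H(U)=I(X:Y)$, and $I(X:Y\mid U)=0$. But, as you yourself then observe, these three conditions jointly force $H(U\mid Y)=0$. That would make $U$ a (tropical) common function of $X$ and $Y$ with entropy equal to $I(X:Y)$. By the G\'acs--K\"orner theorem~\cite{gacs1973common}, the maximal entropy of an asymptotic common function of $X^{\otimes N}$ and $Y^{\otimes N}$ --- even allowing $o(N)$ slack in $H(U\mid X)$ and $H(U\mid Y)$, which is exactly what the tropical equivalence absorbs --- is the single-letter G\'acs--K\"orner common information, and this is generically strictly less than $I(X:Y)$ (it is zero, e.g., for a binary symmetric channel). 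So the variable you posit simply does not exist for a generic pair $(X,Y)$. In fact, a correct version of the statement would realize both the Wyner lower bound $\min\{H(U) : I(X:Y\mid U)=0\}\ge I(X:Y)$ and the G\'acs--K\"orner upper bound with equality, which is a very special structural condition on $(X,Y)$ rather than a universal phenomenon.

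What the Ahlswede--K\"orner reduction actually gives is strictly weaker: $H(U\mid X_{[k]})=0$ together with $I(X_J:U)=I(X_J:X_{n-1})$ for every $J\subseteq[k]$, and nothing more. From these one can deduce $H(U)=I(X_{[k]}:X_{n-1})$ by taking $J=[k]$, but the Markov condition $I(X_{[k]}:X_{n-1}\mid U)=0$ does \emph{not} follow --- indeed the identity $I(X_{[k]}:X_{n-1}\mid U)=H(U\mid X_{n-1})$ (valid whenever $H(U\mid X_{[k]})=0$) shows the two quantities are the same, and both are generally positive. Your second and third steps use this Markov relation and the induced equality $H(U\mid X_{n-1})=0$ as load-bearing facts, so once the first step collapses, the chain for general $J$ collapses with it. The correct proof instead goes through the image-size (or ``blowing-up'' / covering) characterization of~\cite{ahlswede2006common,Ahlswede-Connection-1977}, which directly produces the equalities $H(X_J\mid U)=H(X_J\mid X_{n-1})$ for all $J\subseteq[k]$ simultaneously, rather than trying to deduce them from a conditional-independence property that does not hold. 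Since the paper takes this lemma as a black box and gives no proof of its own, there is no internal argument to compare against; but the argument you propose is not a valid substitute.
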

\begin{remark}
\label{rem:AK-reformulated}
The claim $ H(X_{J}|X'_{n-1})=H(X_{J}|X_{n-1})$ in the Ahlswede–K\"orner reduction
can be reformulated as $ I(X_{J}:X'_{n-1})=I(X_{J}:X_{n-1})$.
\end{remark}
\begin{remark}
The extension of the distribution guaranteed by the Ahlswede--K\"orner reduction is not unique, i.e., there may exist different (nonisomorphic) extensions of $(X_{i}:i=0,\dots,n-1)$ 
satisfying the conditions of Lemma~\ref{l:ak}.
We abuse notation and write $X'_{n-1}:=\AK(X_{n-1}\mid X_{0},\dots,X_{k-1})$, meaning that any such extension may be chosen.
\end{remark}

Essentially Lemma~\ref{l:ak} claims that the variable $X_{n-1}$
  can be ``reduced'' to a new random variable $X'_{n-1}$, that will be
  functionally dependent on $X_{[k]}$, while all the quantities
  $I(X_{J}:X_{n-1})$, $J\in 2^{[k]}$, will remain preserved after the
  reduction. 
  In Fig.~\ref{fig:ak} we illustrate this reduction for the case $k=n=2$.
   This statement is not literally true for untropicalized
  tuple, but only hold up to an additive error, which is sublinear
  with respect to $H(X_{[n]})$. Recall that tropical tuple is a
  growing sequence of the classical tuples of random variables such
  that entropic quantities of individual members of the sequence grow
  quasi-linearly. Entropic quantity evaluated on the tropical
  tuple is the limit of the corresponding quantities of individual
  tuples normalized by the index in the sequence. Thus after
  tropicalization sublinear errors vanish.
  
  \def\firstcircle{(0,0) circle (1.8cm)}
\def\secondcircle{(0:2cm) circle (1.8cm)}
\def\thirdcircle{(-60:2cm) circle (1.8cm)}

\begin{figure}
\begin{center}
\begin{tikzpicture}[scale=1.0]
    \begin{scope}[fill opacity=0.5]
        \fill[red] \firstcircle;
        \fill[blue] \secondcircle;
        \fill[gray] \thirdcircle;
    \end{scope}

        \draw[thick] \firstcircle node[below] {};
        \draw[thick] \secondcircle node [below] {};
	\draw[thick] \thirdcircle node [below] {};

	\draw[thick] node at (-0.8,0.3) {\tiny $H(X_0|X_1,X_2)$};
	\draw[thick] node at (2.8,0.3) {\tiny $H(X_1|X_0,X_2)$};
	\draw[thick] node at (1,-2.6) {\tiny $H(X_2|X_0,X_1)$};
	\draw[thick] node at (1.07,-0.55) [rotate=30, scale=0.65] {\tiny $I(X_0:X_1:X_2)$};
	\draw[thick]  node at (0.88,0.65) [rotate=50,  scale=0.8]  {\tiny $I(X_0:X_1|X_2)$};
	\draw[thick]  node at (1.90,-1.20)  [rotate=40,  scale=0.7] {\tiny $I(X_1:X_2|X_0)$};
	\draw[thick]  node at (0.1,-1.20) [rotate=-40,  scale=0.7]  {\tiny $I(X_0:X_2|X_1)$};

	\draw[thick] node at (-1.5,1.5) {$X_0$};
	\draw[thick] node at (3.5,1.5) {$X_1$};
	\draw[thick] node at (1,-3.9) {$X_2$};

 \begin{scope}[shift={(10,0)}]

\begin{scope}[fill opacity=0.5]
    \fill[red] \firstcircle;
    \fill[blue] \secondcircle;
\end{scope}

\begin{scope}
    \clip \firstcircle;
    \fill[gray, fill opacity=0.5] \thirdcircle;
\end{scope}

\begin{scope}
    \clip \secondcircle;
    \fill[gray, fill opacity=0.5] \thirdcircle;
\end{scope}
        
        \draw[thick] \firstcircle node[below] {};
        \draw[thick] \secondcircle node [below] {};
	\draw[thick] \thirdcircle node [below] {};

	\draw[thick] node at (-0.8,0.3) {\tiny $H(X_0|X_1,X_2)$};
	\draw[thick] node at (2.8,0.3) {\tiny $H(X_1|X_0,X_2)$};
	\draw[thick] node at (1,-2.6) {\small $0$};
	\draw[thick] node at (1.07,-0.55) [rotate=30, scale=0.65] {\tiny $I(X_0:X_1:X_2)$};
	\draw[thick]  node at (0.88,0.65) [rotate=50,  scale=0.8]  {\tiny $I(X_0:X_1|X_2)$};
	\draw[thick]  node at (1.90,-1.20)  [rotate=40,  scale=0.7] {\tiny $I(X_1:X_2|X_0)$};
	\draw[thick]  node at (0.1,-1.20) [rotate=-40,  scale=0.7]  {\tiny $I(X_0:X_2|X_1)$};

	\draw[thick] node at (-1.5,1.5) {$X_0$};
	\draw[thick] node at (3.5,1.5) {$X_1$};
	\draw[thick] node at (1,-3.9) {$X_2'$};

\end{scope}

	\draw[thick]  (6.1,-2.0) node {\Large $\xRightarrow{\text{AK reduction}}$};

\end{tikzpicture}
\caption{Diagrams representing the Ahlswede--K\"orner reduction for $k=2$. 
The reduced variable $X_2' = \AK(X_{n-1}\mid X_{0},X_{1})$ has the same mutual information with $X_0$, $X_1$, and $(X_0,X_1)$ as the original variable $X_2$;
at the same time, $H(X_2'|X_0,X_1)$ vanishes.}\label{fig:ak}
\end{center}
\end{figure}

\paragraph{History of the Ahlswede--K\"orner reduction.}
This technical tool goes back to the work of Ahlswede and K\"orner in
the 1970s \cite{ahlswede2006common,Ahlswede-Connection-1977}.  For
readers interested in a proof of this lemma, we can point to several
sources.
For the case $k=2$ and $n=3$, the statement is proven in
\cite[Lemma~5]{makarychev2002new}, and the argument extends
straightforwardly to arbitrary $k$ and $n$.  A similar argument can be
found in \cite[Theorem~3.1]{matveev2019arrow}. The latter article,
however, uses somewhat different terminology.

The lemma can also be obtained as a special case of more general
techniques.  Namely, one may first introduce a copy $X'_{n-1}$ of
$X_{n-1}$ (preserving its joint distribution with
$(X_{0},\dots,X_{k-1})$), and then perform tightening as in
\cite[Lemma~3]{matus2016entropy} (with $L=\{X'_{n-1}\}$ and
$t = H(X_{n-1} \mid X_{0},\dots,X_{k-1})$).  The technically most
involved part of this argument is established in
\cite[Theorem~3]{matus2007two}, which can be viewed as a
generalization of the Slepian--Wolf theorem \cite{slepian1973coding}.
An overview of the tightening technique and its applications can be
found in \cite{csirmaz2025exploring}.

\section{Main construction}\label{s:construction}
In this section we prove our main technical results,
Lemma~\ref{l:Wk-bounds} and Corollary~\ref{cor:W-bound}, which are
repeated below verbatim. They are systematically applied in all proofs of
the article.  Assume we are given a jointly distributed triple $(X,Y,A)$
or, possibly, a longer joint distribution
$(X,Y,A,B,C_0,\ldots,C_{n-1})$ extending $(X,Y,A)$.  For each integer
$k\ge 0$, we extend this tuple with a new variable $W_{k}$ satisfying the
properties stated in the lemma.

\lWkbounds*

\subsection{The main construction}
We start with a tuple of random variables $(X,Y,A,B,C_{0},\dots,C_{n-1})$,
possibly satisfying some symmetry that permutes variables $A$, $B$ and
$C_{i}$'s, and then apply the following steps.
\paragraph{Step 1 (duplication):} %
First of all, we apply  the Copy Lemma $k$ times to $A$ under $(X,Y)$
and obtain new variables $A'_1,\ldots, A'_k$ such that for each  $i=1,\ldots, k$  
\begin{itemize}
\item the new variable $A'_i$ has the same distribution as $A$ conditional on $(X,Y)$;
\item conditional on $(X,Y)$, the new variable $A'_i$ is independent of  $A'_1,\ldots,A'_{i-1},A,B,C_{0},\dots,C_{n-1}$.
\end{itemize}
Observe that
\begin{align*}
I(A'_1,\ldots,A'_k  &: A,B,C_{0},\dots,C_{n-1} |X,Y) \\
&= 
\sum\limits_{i=1}^k I(A'_{i}  : A,B,C_{0},\dots,C_{n-1} |X,Y,A'_1,\ldots,A'_{i-1} ) \\
& \le \sum\limits_{i=1}^k I(A'_{i}  : A'_1,\ldots,A'_{i-1}, A,B,C_{0},\dots,C_{n-1} |X,Y ) 
=0
\end{align*}
(the first step is the chain rule for the mutual information; 
the second step is an instance of the Shannon-type inequality $I(U:V|W) \le I(U:V,W)$; the last equality follows from the definition of $A'_i$ in the Copy Lemma).
Thus, we conclude that 
\begin{itemize}
\item the random variables $A'_1,\ldots, A'_k$ and the tuple
  $(A,B,C_0,\ldots,C_{n-1})$ are all together mutually independent
  conditional on $(X,Y)$.
\end{itemize}
Denote $\tilde A_{k} = (A'_1,\ldots,A'_k)$.
Observe that if the original distribution was symmetric under swapping
$A$ and $B$, then the same symmetry holds for the extended
distribution
 \[
   (X,Y, A,B,C_0,\ldots,C_{n-1},\tilde A_{k})
 \]
 (see the \emph{moreover} claim of the Copy Lemma).

\paragraph{Step 2 (reduction):} %
Further, we apply the Ahlswede--K\"orner reduction to $\tilde A_{k}$
under the tuple $(A,B,C_0,\ldots,C_{n-1})$ and denote the result by
\begin{equation*}
  W_{k}:=\AK(\tilde A_{k}\mid X,Y,A,B,C_0,\ldots,C_{n-1}).
\end{equation*}
We use the convention that $W_{0}$ is the trivial (valued in a
one-point-set) random variable.

Observe that if $(X,Y, A,B,C_0,\ldots,C_{n-1},\tilde A)$ is symmetric
under swapping $A$ and $B$, then the \emph{entropy profile} of
$(X,Y, A,B,C_0,\ldots,C_{n-1},W_{k})$ has a similar symmetry, i.e.,
\[
  H(A,W_{k}) = H(B,W_{k}), \quad H(A, C_J,W_{k}) = H(B,C_J, W_{k}),
  \text{ etc}.
\]

\subsection{Proof of Lemma \ref{l:Wk-bounds}}
We recall that we define each variable $A'_{i+1}$ with help of the Copy Lemma so that
\begin{enumerate}[label=(\alph*)]
\item it has the same distribution as $A$  conditional on $(X,Y)$;
\item $A'_{i+1}$  and $(A'_1,\ldots, A'_{i},A,B,C_0,\ldots,C_{n-1})$ are independent conditional on $(X,Y)$.
\end{enumerate}
\begin{claim}
\label{cl:markov}
For each $k$ 
\[
(A,B,C_0,\ldots,C_{n-1})\to(X,Y) \to (A'_1,\ldots, A'_{k})
\]  
is a Markov chain. 
\end{claim}
This lemma follows easily from the definition of the quantities  $A'_i$. 
For completeness of exposition, we give a detailed derivation in terms of information inequalities.
\begin{proof}[Proof of the claim:]
The argument is based on the property (b) above with $i=1,\ldots, k-1$.
We use the chain rule, and then apply $i$  times  inequalities having the form \eqref{eq:chain-rule-simplified} and obtain 
\begin{align*}
I(A,B,C_0,\ldots,C_{n-1}:A'_1,\ldots, A'_i |X,Y) & = \sum\limits_{j=0}^i I(A,B,C_0,\ldots,C_{n-1}:A'_j |X,Y,A'_1,\ldots,A'_{j-1}) \\
&\le \sum\limits_{j=0}^i I(A'_1,\ldots,A'_{j-1},A,B,C_0,\ldots,C_{n-1}:A'_j |X,Y) \\
&= 0,
\end{align*}
and the claim is proven.
\end{proof}

Now we can prove assertion~\ref{l:Wk-bounds-wxy} of Lemma~\ref{l:Wk-bounds}.
Claim~\ref{cl:markov} implies
\begin{align*}
I(\Atilde_{k}:A,B,C_{0},\ldots, C_{n-1},X,Y) &\le I(\Atilde_{k}:X,Y) +  I(\Atilde_{k}:A,B,C_{0},\ldots, C_{n-1}|X,Y)  \\
&=  I(\Atilde_{k}:X,Y).
\end{align*}
From the definition of the Ahlswede--K\"orner reduction (see Remark~\ref{rem:AK-reformulated}) it follows 
\begin{align*}
I(W_{k}:X,Y,A,B,C_{0},\ldots, C_{n-1})  &=
I(\Atilde_{k}:X,Y,A,B,C_{0},\ldots, C_{n-1}) \\ 
&\le  I(\Atilde_{k}:X,Y)
=  I(W_k:X,Y).
\end{align*}
Therefore,  with the chain rule we obtain 
\begin{align*}
 I(W_{k}:A,B,C_{0},\ldots, C_{n-1}|X,Y) &= I(W_{k}:X,Y,A,B,C_{0},\ldots, C_{n-1})  - I(W_k:X,Y) \\
 &=0.
\end{align*}
Hence,
\begin{align*}
H(W_k|X,Y) &= H(W_{k}|X,Y,A,B,C_{0},\ldots, C_{n-1}) + I(W_{k}:A,B,C_{0},\ldots, C_{n-1}|X,Y) \\
& =  H(W_{k}|X,Y,A,B,C_{0},\ldots, C_{n-1}) = 0
\end{align*}
(the last equality follows from the Ahlswede--K\"orner reduction).

\smallskip

To prove assertion~\ref{l:Wk-bounds-I} of Lemma~\ref{l:Wk-bounds}, we observe that the following conditional distributions are all identical:
\begin{enumerate}[label=(\roman*)]
\item distribution of $A$  given the value of $(X,Y,A'_1,\ldots, A'_{i})$ 
\item distribution of $A$  given the value of $(X,Y)$  

(this distribution is equivalent to the previous one due to the Markov property)
\item distribution of $A'_{i+1}$  given the value of $(X,Y)$ 

(equivalent to the previous one due to the property (a) above)
\item distribution of $A'_{i+1}$  given the value of $(X,Y,A'_1,\ldots, A'_{i})$ 

(equivalent to the previous one due to the property (b) above).
\end{enumerate}
Comparing (i) and (iv),  we conclude that the conditional distributions
\[
(X,Y) \ \text{given}\ \Atilde_{i+1} 
\]
and 
\[
 (X,Y)\ \text{given}\ (\Atilde_i,A)
\]
are the same. It follows that 
\[
\left\{
\begin{array}{l}
H(X,Y|\Atilde_{i+1}) = H(X,Y|\Atilde_i,A) = H(X,Y,A|\Atilde_i) - H(A|\Atilde_i),\\
H(X|\Atilde_{i+1}) = H(X|\Atilde_i,A)  = H(X,A|\Atilde_i) - H(A|\Atilde_i),\\
H(Y|\Atilde_{i+1}) = H(Y|\Atilde_i,A)  = H(Y,A|\Atilde_i) - H(A|\Atilde_i).
\end{array}
\right.
\]

The Ahlswede--K\"orner reduction guarantees that 
\[
\left\{
\begin{array}{l}
H(X,Y|\Atilde_{i+1}) = H(X,Y|W_{i+1}),\\
H(X|\Atilde_{i+1}) = H(X|W_{i+1}),\\
H(Y|\Atilde_{i+1}) = H(Y|W_{i+1}),\\
\end{array}
\right.
\] 
and
\[
\left\{
\begin{array}{l}
H(X,Y|\Atilde_{i}) = H(X,Y|W_{i}),\\
H(X|\Atilde_{i}) = H(X|W_{i}),\\
H(Y|\Atilde_{i}) = H(Y|W_{i}), \\
H(X,Y,A|\Atilde_i) = H(X,Y,A|W_i), \\
H(X,A|\Atilde_i) = H(X,A|W_i), \\
H(Y,A|\Atilde_i) = H(Y,A|W_i), \\
H(A|\Atilde_i) = H(A|W_i). 
\end{array}
\right.
\] 

We combine these observations and obtain
\begin{align*}
I(X:Y|W_{i+1}) & = H(X|W_{i+1}) + H(Y|W_{i+1}) - H(X,Y|W_{i+1})\\
&= H(X|\Atilde_{i+1}) + H(Y|\Atilde_{i+1}) - H(X,Y|\Atilde_{i+1})\\
&= H(X|\Atilde_{i},A) + H(Y|\Atilde_{i},A) - H(X,Y|\Atilde_{i},A)\\
&=
H(X,A|\Atilde_i) +
H(Y,A|\Atilde_i)  -
H(X,Y,A|\Atilde_i) - H(A|\Atilde_i)\\
&=
H(X,A|W_i) +
H(Y,A|W_i)  -
H(X,Y,A|W_i) - H(A|W_i)\\
&= I(X:Y|W_i,A),
\end{align*}
and~\ref{l:Wk-bounds-I} is proven.

\smallskip

To prove~\ref{l:Wk-bounds-wx} of Lemma~\ref{l:Wk-bounds}, we observe that  
\begin{align*}
  H(W_{k} | X)
  &=
   H(W_{k},X) - H(X)\\
   &= H(X|W_{k}) + H(W_{k}) -H(X)  \\
   &= H(X|\Atilde_{k}) + I(\tilde A_{k} :X,Y,A,B,C_0,\ldots, C_{n-1} ) -H(X)  \\  
    &\text{/* from the AK Lemma */}\\
       &=  I(\tilde A_{k} :X,Y,A,B,C_0,\ldots, C_{n-1} ) - I(\Atilde_{k}:X)  \\  
         &=  I(\tilde A_{k} :X,Y ) - I(\Atilde_{k}:X)  \\  
             &\text{/* from the Copy Lemma:  $\Atilde_{k}$ is independent of}\\
  &\ \text{ $(A,B,C_{0},\ldots, C_{n-1})$ conditioned on $(X,Y)$  */}\\
         &=  I(\tilde A_{k} :Y |X) =  H(\tilde A_{k}  |X) - H(\tilde A_{k} |X,Y).
\end{align*}

Let us estimate the terms in the right-hand side of this equality.
On the one hand,
\[
  H(\tilde A_{k}  |X)   = H(A'_1,\ldots, A'_k  |X)
  \le \sum\limits_{i=1}^k H(A'_i|X) = k\cdot H(A|X). 
\]
On the other hand, by the definition of $A'_i$, we  have 
\[
  H(\tilde A_{k} |X,Y) = H(A'_1,\ldots, A'_k  |X,Y)
  = \sum\limits_{i=1}^k H(A'_i|X,Y) = k\cdot H(A|X,Y). 
\]
Thus,
\(
H(W_{k} | X)  = H(\tilde A_{k}  |X) - H(\tilde A_{k} |X,Y)
\)
rewrites to
\[
  H(W_{k} | X)  \leq k \cdot \left(H(A|X) - H(A|X,Y)  \right)
  = k \cdot I(Y:A|X) \le k\cdot\epsilon \cdot I(X:Y).
\]
A similar argument gives the bound for $ H(W_{k} | Y)$.

It remains to prove the \emph{moreover} part of the lemma.  We
already know that the \emph{entropy profile}
$(X,Y, A,B,C_0,\ldots,C_{n-1},W_{k})$ is symmetric under swapping $A$
and $B$.  
Applying
  Lemma~\ref{l:symmetrization}
  to the distribution proves
  the claim.

\subsection{Proof of Corollary~\ref{cor:W-bound}}
Here we deduce Corollary~\ref{cor:W-bound} from
Lemma~\ref{l:Wk-bounds}.
\corWbound*
\begin{proof}
  The argument is based on Lemma~\ref{l:Wk-bounds}, where we will
  choose a suitable $k$ and let $W=W_{k}$.  We apply
  Lemma~\ref{l:Wk-bounds} with the minimal $k$ such that
  \begin{equation*}
    I(X:Y|W_{k}) \le  I(X:Y|A,W_{k})  + 
    \delta\cdot I(X:Y)
  \end{equation*}
  (the choice of the parameter $\delta<1$ is explained later).
  This means that for each $i=1,2,\ldots, k$ the gap  
  \[
    I(X:Y| W_{i-1})-I(X:Y| W_{i})  
  \] 
  is at least $\delta\cdot I(X:Y)$.  Since the sum of the  gaps 
  \begin{align*}
  \big( I(X:Y)-I(X:Y| W_{1}) \big)   &+   \big( I(X:Y|W_1)-I(X:Y| W_{2}) \big)   + \ldots \\
  & \ldots +  \big( I(X:Y|W_{k-1})-I(X:Y| W_{k}) \big)   
   =  I(X:Y) -  I(X:Y|W_{k})
  \end{align*}
   does not exceed $I(X:Y)$, the number of steps $k$ is at most
  \(
  k\le 1/\delta.
  \)
  From Lemma~\ref{l:Wk-bounds}\ref{l:Wk-bounds-wx} we get
  \begin{equation*}
    \max\set{H(W_{k}|X),H(W_{k}|Y)}
    \leq
    k\cdot\epsilon\cdot I(X:Y)
    \leq
    \frac\epsilon\delta\cdot I(X:Y) 
  \end{equation*}
  It follows that 
  \begin{align*}
  I(&X,Y:A|W_{k})
  =
    I(X:A|Y,W_{k}) + I(Y:A|X,W_{k}) + I(X:Y:A|W_{k})\\
  & =
    I(X:A|Y,W_{k}) + I(Y:A|X,W_{k}) + I(X:Y:A|W_{k})\\
  &\le
    I(X:A|Y) +  H(W_{k}|Y) + I(Y:A|X)
  +
        H(W_{k}|X) + I(X:Y:A|W_{k}) \\
        &\ \text{(here we used twice \eqref{eq:shannon-type-strange-1-cond} with a suitable substitution of variables)}
    \\
  &\le
    I(X:A|Y) + I(Y:A|X) + I(X:Y:A|W_{k})
  +
    O\left(  \frac\epsilon\delta\right) \cdot I(X:Y)\\
  &\le
    I(X:Y:A|W_{k}) + O\left(  \frac\epsilon\delta\right) \cdot I(X:Y)\\ 
  &\le
    I(X:Y|W_{k}) - I(X:Y|A,W_{k})  +O\left(  \frac\epsilon\delta\right) \cdot I(X:Y)\\ 
  &\le
    \delta \cdot I(X:Y) + O\left(  \frac\epsilon\delta\right)\cdot I(X:Y) 
\end{align*}
We set $\delta=\sqrt{\epsilon}$, and the corollary follows.
\end{proof}

\section{Conditional inequalities with  precision $O(\sqrt{\epsilon})$}\label{s:sqrt}
In this section we prove that if the conditional information quantities  
\[
I(X:A|Y),\ I(Y:A|X)
\]
are of order $\epsilon$, then a version of the Ingleton inequality~\eqref{eq:ingleton} 
is valid with a  small \emph{correction term} of order
$\sqrt{\epsilon}$; see Theorem~\ref{th:sqrt}. 

\thsqrt*
\begin{proof}
  First of all, we apply Corollary~\ref{cor:W-bound} and take a $W$ such that 
  \begin{itemize}
  \item $H(W|X) = O(\sqrt{\epsilon})\cdot I(X:Y)$,
  \item $H(W|Y) = O(\sqrt{\epsilon})\cdot I(X:Y)$,
  \item $I(X,Y:A|W) = O(\sqrt{\epsilon})\cdot I(X:Y)$.
  \end{itemize}
  The rest of the proof is an application of suitable Shannon--type
  inequalities and can therefore be verified with the help of any solver
  that handles Shannon--type entropy inequalities.
  
  For all tuples $(X,Y,A,W)$ the following Shannon--type inequality holds:
  \[
    I(X:Y|A)\geq H(W |A) + I(X:Y|A,W)-\big(H(W |X) + H(W |Y)\big), 
  \]
  see inequality~\eqref{eq:shannon-type-strange-2} in the Appendix.
  For the $W$ chosen above, this inequality implies  
  \begin{equation*}
    \begin{split}
      I(X:Y|A)
      &\geq
        H(W |A) + I(X:Y|A,W ) - O(\sqrt{\epsilon}) \cdot I(X:Y)  
       \\
      &\geq
        H(W|A) + I(X:Y|W)  - I(X:Y:A|W) -  O(\sqrt{\epsilon} ) \cdot I(X:Y)\\
       &\geq
        H(W|A) + I(X:Y|W)  - I(X,Y:A|W)  
        -  O(\sqrt{\epsilon} ) \cdot I(X:Y)
     \\
      &\geq
        H(W|A) + I(X:Y|W) -  O(\sqrt{\epsilon} ) \cdot I(X:Y).
    \end{split}
  \end{equation*} 
  It follows that 
  \begin{equation}
    \label{eq:q-2}
    \begin{split}
      H(W |A)
      &\leq
        I(X:Y|A) - I(X:Y|W) +  O(\sqrt{\epsilon})   \cdot I(X:Y)
    \end{split}
  \end{equation}
  Now we adapt the  conventional inference of the Ingleton  inequality. 
  We use the following Shannon--type equalities
  (see~\eqref{eq:common-info-below-mutual}, 
  \eqref{eq:common-info-below-mutual-conditioned}, 
  \eqref{eq:shannon-type-strange-1} in the Appendix):
  \begin{equation}
    \label{eq:inference-ingleton}
    \left.
      \begin{aligned}
        H(W |B)
        &\le
          H(W|X) + H(W|Y) + I(X:Y|B) 
        \\  
        &\leq
          I(X:Y|B)  + O(\sqrt{\epsilon}) \cdot  I(X:Y|A) 
        \\
        &(\text{follows from the choice of $W$)},\\
        H(W)
        &\le
          H(W|A) + H(W|B) + I(A:B),\\ 
        I(X:Y)
        &\leq
          H(W) + I(X:Y|W) 
      \end{aligned}
    \right\}
  \end{equation}
  Summing inequalities~\eqref{eq:q-2} and~\eqref{eq:inference-ingleton} yields the conclusion of the theorem.
\end{proof}

\begin{remark}
  If the quantity $I(X:Y)$ is a priori bounded, e.g., if
  $I(X:Y)\leq1$, then the inequality in Theorem~\ref{th:sqrt} rewrites
  to
  \[
    I(X:Y)
    \le
    I(X:Y|A) + I(X:Y|B) + I(A:B) + O(\sqrt{\epsilon}).
  \]
\end{remark}

Similar techniques can be used to prove Theorem~\ref{th:sqrt-ii}.
\thsqrtii*
\begin{remark}\label{th:sqrt-ii-short}
  Because of the bound $I(X :Y|A) \le \epsilon \cdot I(X :A)$ in the
  assumptions, we can equivalently rewrite the conclusion of 
  Theorem~\ref{th:sqrt-ii} as
\begin{equation*}
I(X :Y) \le I(X :Y|B)+I(A:B)+ O(\sqrt{\epsilon}) \cdot I(X :A).
\end{equation*} 
\end{remark}

\begin{proof}
  Similarly to the previous argument, we apply Corollary~\ref{cor:W-bound}
  (switching $A$ and $Y$ in the assumptions and the conclusion
  of the lemma) and take a $W$ such that
  \begin{itemize}
  \item $H(W|X) = O(\sqrt{\epsilon})\cdot I(X:A)$,
  \item $H(W|A) = O(\sqrt{\epsilon})\cdot I(X:A)$,
  \item $I(X,A:Y|W) = O(\sqrt{\epsilon})\cdot I(X:A)$.
  \end{itemize}
  For brevity, in what follows we use the notation
  $\Delta := \sqrt{\epsilon} \cdot I(X:A)$.  The rest of this proof is
  an application of suitable Shannon--type inequalities.
  
  We begin  with a  Shannon--type inequality
  \begin{equation*}
    H(W) \le H(W|A) + H(W|B) + I(A:B),
  \end{equation*}
  see~\eqref{eq:common-info-below-mutual} in the Appendix. The term
  $H(W|A)$ in the right-hand side of the inequality above is bounded by
  $O(\Delta)$, so we can rewrite this inequality as
  \begin{equation}
  \label{eq:thB-2}
    H(W) \le H(W|B) + I(A:B) + O(\Delta).
  \end{equation}  
  To bound the term $ H(W|B)$, we use another Shannon--type inequality
  \begin{equation*}
    H(W|B) \le H(W|X) + H(W|Y)+ I(X:Y|B),
  \end{equation*} 
  see~\eqref{eq:common-info-below-mutual-conditioned} in the Appendix.
  The term $H(W|X)$ in the right-hand side of the last inequality can be
  replaced immediately by $O(\Delta)$, and we obtain
      \begin{equation}
  \label{eq:thB-4}
    H(W|B) \le  H(W|Y)+ I(X:Y|B) + O(\Delta).
  \end{equation} 
  Our next step is an upper bound for $H(W|Y)$.
  We isolate this step of the proof into a separate claim.
  \begin{claim}
    \(H(W|Y) \le H(W) - I(X:Y)  + O(\Delta).\)
  \end{claim}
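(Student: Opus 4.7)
The plan is to unfold the claim into a lower bound on $I(W:Y)$, since $H(W|Y) \le H(W) - I(X:Y) + O(\Delta)$ is exactly the statement $I(W:Y) \ge I(X:Y) - O(\Delta)$. So I would reduce the claim to showing that $W$ captures essentially all of the mutual information between $X$ and $Y$, up to an error of order $\Delta$.

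The natural decomposition is
\[
  I(X:Y) = I(X:Y|W) + I(X:Y:W),
\]
so the claim will follow if I can show (i) $I(X:Y|W) = O(\Delta)$ and (ii) $I(X:Y:W) \le I(W:Y)$. Step (ii) is a standard Shannon-type inequality: expand $I(X:Y:W) = I(W:Y) - I(W:Y|X)$ and use non-negativity of conditional mutual information. Step (i) is the substantive step, and it is where I would use the third property of $W$ from Corollary~\ref{cor:W-bound} (after switching the roles of $A$ and $Y$), namely $I(X,A:Y|W) = O(\Delta)$. Since chain-rule expansion gives $I(X,A:Y|W) = I(X:Y|W) + I(A:Y|X,W)$ with both summands non-negative, the desired bound $I(X:Y|W) \le I(X,A:Y|W) = O(\Delta)$ is immediate.

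Combining the two steps yields $I(X:Y) \le I(W:Y) + O(\Delta)$, which rearranges to the claim. I do not anticipate a serious obstacle here: all of the needed tools are Shannon-type, and the only non-Shannon ingredient, the construction of the auxiliary $W$, has already been packaged in Corollary~\ref{cor:W-bound}. The only care-point is to remember that the corollary is being applied to the tuple with $A$ and $Y$ swapped, so that the bound $I(X,A:Y|W) = O(\Delta)$ plays the role that $I(X,Y:A|W)$ had in the proof of Theorem~\ref{th:sqrt}.
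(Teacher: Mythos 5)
Your proposal is correct and takes essentially the same route as the paper: you reduce the claim to $I(X:Y)\le I(W:Y)+O(\Delta)$, and establish it via $I(X:Y)\le I(W:Y)+I(X:Y|W)\le I(W:Y)+I(X,A:Y|W)=I(W:Y)+O(\Delta)$, which is exactly the chain in the paper's proof (the paper presents it as a single telescoping chain of chain-rule rewrites rather than the two-step split you describe, but the inequalities used are the same). Your care-point about applying Corollary~\ref{cor:W-bound} with $A$ and $Y$ swapped is also the same observation the paper makes before stating the claim.
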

  \begin{proof}[Proof of the claim]
    First, applying several times the chain rule for the mutual information, we obtain
    \begin{equation*}
      \begin{split}
        I(X:Y)
        &\le
          I(X:Y)  + I(Y:W|X) \\
        &=
          I(X,W:Y) \\
        &=
          I(Y:W)  + I(X:Y|W) ,\\
        &\le
          I(Y:W)  + I(X:Y|W) + I(Y:A|X,W) \\
        & =
          I(Y:W)  + I(X,A:Y|W) \\
        &=
          I(Y:W) + O(\Delta),
      \end{split}
    \end{equation*}
   where the last step uses the assumption  $I(X,A:Y|W)   = O(\Delta)$. 
    
   It follows that 
   \begin{equation*}
     \begin{split}
       H(W|Y)
       &=
         H(W) - I(Y:W)\\
       &\le
         H(W) - I(X:Y) + O(\Delta)\\
     \end{split}
   \end{equation*}
   and the claim is proven.
 \end{proof}
 
 Combining inequalities~\eqref{eq:thB-2},~\eqref{eq:thB-4}, and the Claim, we obtain
 \begin{equation*}
   H(W) \le H(W) - I(X:Y)   + I(X:Y|B) + I(A:B) + O(\Delta),
 \end{equation*} 
 By Remark~\ref{th:sqrt-ii-short} this completes the proof of the theorem.
\end{proof}

\section{Subtler claims with the $O(\sqrt{\epsilon})$ precision}
\label{s:subtler}
In this section we combine Lemma~\ref{l:Wk-bounds} with symmetry
considerations, derive a new information inequality for six-tuples of
random variables, and prove Theorem~\ref{th:subtler} from the introduction.

\thsubtler*
\begin{remark}
The constant hidden in the $O(\cdot)$-notation does not depend on $t$ nor on the distribution of $(X, Y, A, B, C, D)$.
\end{remark}
\begin{proof}
    We first show that, without loss of generality, we may assume that
    the distribution of the tuple $(X,Y,A,B,C,D)$ is symmetric with
    respect to transposition of $A$ and $B$.  We begin with
    tropicalization: we replace the original tuple by its tropical
    version $(X,Y,A,B,C,D)^{tr}$, where each variable is replaced by
    the joint of $n$ independent copies of itself. The assumptions of
    the theorem still hold for the tropical tuple; see
    Remark~\ref{r:trop2class2trop}.  We then apply
    Lemma~\ref{l:symmetrization} and transform this tropical tuple
    into a symmetric one that still satisfies the assumptions of the
    theorem.  Note that the entropies of the symmetrized tropical
    tuple coincide with those of the original tuple.  Thus, if the
    theorem holds for the symmetrized tuple, then it also holds for
    the original one.  

  Now we are ready to prove the theorem.  To
  address the first claim of the theorem, we apply
  Corollary~\ref{cor:W-bound} to construct $W$ such that
  \begin{itemize}
  \item $H(W|X) = O(\sqrt{\epsilon})\cdot I(X:Y)$,
  \item $H(W|Y) = O(\sqrt{\epsilon})\cdot I(X:Y)$,
  \item $I(X,Y:A|W) = O(\sqrt{\epsilon})\cdot I(X:Y)$.
  \end{itemize}
  Due to Lemma~\ref{l:Wk-bounds} (the \emph{moreover} part) we may
  assume that all entropy quantities for $(X,Y,A,W)$ are the same as
  the corresponding quantities for $(X,Y,B,W)$.  Similarly
  to~\eqref{eq:q-2}, using the symmetry we have
  \begin{align*}
    H(W|A)
    &\le
      I(X:Y|A) - I(X:Y|W)  + O(\sqrt{\epsilon})\cdot I(X:Y)\\ 
    H(W|B)
    &\le
      I(X:Y|B)  -  I(X:Y|W) + O(\sqrt{\epsilon})\cdot I(X:Y)
  \end{align*}
  The sum of two inequalities above together with Shannon--type inequalities
  \begin{align*}
    I(X:Y)
    &\le
      H(W) + I(X:Y|W), \\
    H(W)
    &\le
      H(W|A) + H(W|B) + I(A:B)
  \end{align*}
  gives 
  \begin{equation}
    \label{eq:subtler1}
    \begin{split}
      I(X:Y)
      &\leq
        I(X:Y|A) + I(X:Y|B) + I(A:B)  - I(X:Y|W) \\
      &+
        O(\sqrt{\epsilon})\cdot I(X:Y)
    \end{split}
  \end{equation}
  Therefore, if $t\le I(X:Y|W)/I(X:Y)$, then 
  \[
    I(X:Y) \le I(X:Y|A) + I(X:Y|B) + I(A:B) +
    \big( O(\sqrt{\epsilon})-t)\cdot I(X:Y).
  \]
  This gives the first part of the alternative in
  claim~\ref{th:subtler-a}.
    
  On the other hand, for all $(X,Y,C,D,W)$ we can follow the usual
  scheme of the proof of the Ingleton's inequality with $W$
  representing the common information of $X$ and $Y$. More formally,
  we add Shannon--type inequalities
  \begin{align*}
    H(W) &\leq H(W|C) + H(W|D) + I(C:D),
    \\
    H(W|C) &\leq H(W|X) + H(W|Y) + I(X:Y|C),
    \\
    H(W|D) &\leq H(W|X) + H(W|Y) + I(X:Y|D),
    \\
    I(X:Y)  &\leq H(W) + I(X:Y|W)
  \end{align*}
  to obtain
  \begin{align*}
    I(X:Y)
    &\leq
      I(X:Y|C) + I(X:Y|D) + I(C:D) \\
    &+
      2\big(H(W|X) + H(W|Y)\big)
      +I(X:Y|W)
  \end{align*}
  Therefore, for the chosen $W$ we have 
  \begin{equation}
    \label{eq:subtler2}
    \begin{split}
      I(X:Y)
      &\leq
        I(X:Y|C) + I(X:Y|D) + I(C:D) \\
      &+
        O(\sqrt{\epsilon})\cdot I(X:Y)+I(X:Y|W)
    \end{split}
  \end{equation}
  Thus, if $t\geq I(X:Y|W)/I(X:Y)$, we obtain 
  \[
    I(X:Y) \leq
    I(X:Y|C) + I(X:Y|D) + I(C:D) +
    \big(O(\sqrt{\epsilon})+t\big)\cdot I(X:Y).
  \]
  This gives the second part of the alternative.

  To prove part~\ref{th:subtler-b} of the theorem, we take the sum of
  inequalities~\eqref{eq:subtler1} and~\eqref{eq:subtler2}.
\end{proof}

\section{Inequalities with the precision of
  $O(\epsilon\log\frac1\epsilon)$}
\label{s:eloge}
This section is devoted to the proof of Theorem~\ref{th:eloge}. For
convenience the statement is repeated below verbatim.

\theloge*

\begin{proof}
  The conclusion of the theorem (the Ingleton inequality) is trivial in case $I(X:Y) = 0$. So, in what follows we assume without loss of generality that the mutual information $I(X:Y)$ is strictly positive.

  Since the inequalities in both the assumption and the conclusion of
  the theorem are 
  symmetric with respect to swapping $A$ and $B$,
  we may apply
  Lemma~\ref{l:symmetrization} to the four-tuple $(X,Y,A,B)$. Thus,
  without loss of generality, we may assume that the distribution is
  symmetric  (see also Remark~\ref{r:trop2class2trop}).
  
  Now we use Lemma~\ref{l:Wk-bounds} and symmetry of the distribution
  to find a sequence of extensions $(X,Y,A,B,W_{i})$, $i=0,1,\dots$
  satisfying
  \begin{align*}
    H(W_{i}|X,Y) &= 0\\
    I(X:Y|W_{i+1}) &= I(X:Y|W_{i},A)\\
    I(X:Y|W_{i+1}) &= I(X:Y|W_{i},B)\\
    \max\set{H(W_{i}|X),H(W_{i}|Y)} &\leq i\cdot\epsilon\cdot I(X:Y)
  \end{align*}
  Fix two numbers $\lambda,\delta\in (0,1)$. We postpone the concrete
  choices for these two parameters for later in the proof. Take the
  largest $k$ such that for all $0\leq i < k$ holds
  \begin{align}
    \label{eq:eloge-1}
    I(X:Y|W_{i+1})
    &\leq
      \lambda\cdot I(X:Y| W_{i} )\quad\text{and}
    \\
    \label{eq:eloge-2}
    I(X:Y| W_{i})
    &\geq
      \delta\cdot I(X:Y)
  \end{align}
  Then
  \begin{equation*}
    \delta\cdot I(X:Y)\leq I(X:Y| W_{k} )
    \leq \lambda I(X:Y| W_{k-1} ) \leq \ldots
    \leq\lambda^{k}\cdot I(X:Y).
  \end{equation*}
  Since we assume that $I(X:Y)>0$, we may conclude that 
  \begin{equation}\label{eq:k0-bound}
    k\leq\frac{\ln\delta}{\ln\lambda}
  \end{equation}
  
  There are two possible cases, depending on whether
  condition~\eqref{eq:eloge-1} or~\eqref{eq:eloge-2} is violated at $i=k$.
  \subsection{Case 1: $I(X:Y|W_{k+1} )\geq\lambda\cdot I(X:Y| W_{k} )$}
  We use inequality~\eqref{eq:shannon-type-strange-2} from the Appendix
  to obtain 
  \begin{equation*}
    \begin{split}
      I(X:Y|A)
      &\geq
        H(W_{k}|A) + I(X:Y|W_{k} , A) - \big(H(W_{k}|X,A) + H(W_{k}|Y,A)\big)\\
      &\geq
        H(W_{k}|A) + I(X:Y|W_{k+1}) - \big(H(W_{k}|X) + H(W_{k}|Y)\big)\\
      &\geq
        H(W_{k}|A) + \lambda\cdot I(X:Y|W_{k}) - 2k\cdot\epsilon\cdot I(X:Y)
    \end{split}
  \end{equation*}
  Similarly we have
  \[
    I(X:Y|B)\geq
    H(W_{k}|B) + \lambda\cdot I(X:Y|W_{k}) -
    2k\cdot\epsilon\cdot I(X:Y)
  \]
  We can rewrite these inequalities as
  \begin{align}
    \label{eq:case11}
    H(W_{k}|A)
    &\leq
      I(X:Y|A) - \lambda\cdot I(X:Y|W_{k}) + 2k\cdot\epsilon\cdot I(X:Y)\\
    H(W_{k}|B)
    &\leq
      I(X:Y|B) - \lambda\cdot I(X:Y|W_{k}) + 2k\cdot\epsilon\cdot
      I(X:Y)
  \end{align}
  Applying two Shannon inequalities~\eqref{eq:common-info-below-mutual}
  and~\eqref{eq:shannon-type-strange-1} from the Appendix we get  
  \begin{align}
    H(W_{k})
    &\leq
      H(W_{k}|A) + H(W_{k}|B) + I(A:B)\\
    \label{eq:case14}
    I(X:Y)
    &\leq
      H(W_{k}) + I(X:Y|W_{k})
  \end{align}
  Adding inequalities~(\ref{eq:case11}--\ref{eq:case14}) we get
  \begin{equation}
    \begin{split}
      I(X:Y)
      &\leq
        I(X:Y|A) + I(X:Y|B) + I(A:B)\\
      &\quad+
        (1-2\lambda)I(X:Y|W_{k}) + 4k\cdot\epsilon\cdot I(X:Y)
    \end{split}
  \end{equation}
  
  Choosing $\lambda=1/2$ and using the bound~\eqref{eq:k0-bound} for $k$ we obtain
  \begin{equation}\label{eq:eloge-bound1}
    \begin{split}
      I(X:Y)
      &\leq
        I(X:Y|A) + I(X:Y|B) + I(A:B)\\
      &\quad+
        4\epsilon\log_{2}\delta^{-1}\cdot I(X:Y)    
    \end{split}
  \end{equation}
  
  \subsection{Case 2: $I(X:Y| W_{k} )\leq\delta\cdot I(X:Y)$}
  We apply inequality~\eqref{eq:mmrv} from the Appendix, and use the
  properties of $W_i$ from Lemma~\ref{l:Wk-bounds} to estimate
  \begin{align*}
    I(X:Y)
    &\leq
      I(X:Y|A) + I(X:Y|B) + I(A:B) \\
    &\quad+
      I(X:Y| W_{k} ) + I(X: W_{k} |Y) + I(Y:W_{k}|X)\\
    &\leq
      I(X:Y|A) + I(X:Y|B) + I(A:B) + (\delta + 2 k\cdot\epsilon)
      I(X:Y)
  \end{align*}
  Using~\eqref{eq:k0-bound} we rewrite the last inequality as 
  \begin{equation}
    \label{eq:eloge-bound2}
    \begin{split}
      I(X:Y)
      &\leq
        I(X:Y|A) + I(X:Y|B) + I(A:B)\\
      &\quad+ (\delta + 2 \epsilon\log_{2}\delta^{-1})I(X:Y)
    \end{split}
  \end{equation}
  
  Now we set $\delta=\epsilon$ in inequalities~\eqref{eq:eloge-bound1}
  and~\eqref{eq:eloge-bound2} and take the maximum of the two
  bounds. This gives
  \begin{equation*}
    I(X:Y)\leq I(X:Y|A) + I(X:Y|B) + I(A:B) + O(\epsilon\log\epsilon^{-1})I(X:Y)
  \end{equation*}
\end{proof}

\appendix
\section*{Appendix: Useful information inequalities}
In this section we collect several equalities and inequalities for the
Shannon entropy that are used throughout the paper.  First of all, we
freely use the chain rule for the mutual information
\[
I(X,Y:A) = I(X:A) + I(Y:A|X)
\]
and its ``conditioned'' version 
\[
I(X,Y:A|B) = I(X:A|B) + I(Y:A|X,B).
\]
Observe that it implies  
\begin{equation}
\label{eq:chain-rule-simplified}
I(X,Y:A|B) \ge I(Y:A|X,B).
\end{equation}

Besides, we apply several times the well-known Shannon--type inequality 
\begin{equation}
  \label{eq:common-info-below-mutual}
  H(Z) \le H(Z|X) + H(Z|Y) + I(X:Y)
\end{equation}
(which is equivalent to the sum of $I(X:Y|Z)\ge 0$ and $H(Z|X,Y)\ge 0$)
and its conditioned version
\begin{equation}
  \label{eq:common-info-below-mutual-conditioned}
  H(Z|W) \le H(Z|X,W) + H(Z|Y,W) + I(X:Y|W),
\end{equation}
In addition, we use several other Shannon--type inequalities. First of all, we use
\begin{equation}
  \label{eq:shannon-type-strange-1}
  I(X : Y)\le H(W)+I(X :Y|W)
\end{equation}
(which is equivalent to the sum of $H(W|X) \ge 0$ and $I(X:W|Y) \ge 0$)
and its conditional version
\begin{equation}
  \label{eq:shannon-type-strange-1-cond}
  I(X : Y|Z)\le H(W|Z)+I(X :Y|W,Z).
\end{equation}
Besides, we need the inequality 
\begin{equation*}
  H(W) + I(X:Y|W) 
  \leq
  I(X:Y) + \big(H(W|X) + H(W|Y)\big)
\end{equation*}
(which easily rewrites to the trivial $H(W|X,Y)\ge0$), and its
conditioned version
\begin{equation}
  \label{eq:shannon-type-strange-2}
  H(W|A) + I(X:Y|W, A) 
  \leq
  I(X:Y|A) + \big(H(W|X,A) + H(W|Y,A)\big).
\end{equation}
Finally, we use a version of the classical non-Shannon--type inequality
from~\cite{zhang1998characterization} in the form with five variables
(proven in~\cite[theorem~1 for $n=2$]{makarychev2002new}),
\begin{equation}
\nonumber
H(A,B) + 2 I(X:Y:W) \le I(X:Y|A) + I(X:Y|B) + H(A) + H(B) + I(XY:W).
\end{equation}
This inequality is equivalent to
\begin{equation}
\nonumber
  \begin{split}
 2 I(X:Y:W) &\le I(X:Y|A) + I(X:Y|B) + I(A:B) \\
 &\quad + \left(I(X:Y:W)+ I(X:W|Y) + I(Y:W|X)\right).
   \end{split}
\end{equation}
Since $I(X:Y) = I(X:Y:W) + I(X:Y|W)$, it can be rewritten in the form convenient for our application:
\begin{equation}
  \label{eq:mmrv}
  \begin{split}
    I(X:Y)
    &\le
      I(X:Y|A) + I(X:Y|B) + I(A:B) \\
    &\quad+
      I(X:Y|W) + I(X:W|Y) + I(Y:W|X).
  \end{split}
\end{equation}


\bibliographystyle{alpha}   
\bibliography{ingleton-inequality}

\end{document}